\documentclass[conference]{IEEEtran}
\ifCLASSINFOpdf
\else
\fi
\usepackage{amsthm} 
\usepackage{amssymb}  
\usepackage{graphicx}
\usepackage{float} 
\usepackage{wrapfig} 
\usepackage{subcaption}
\usepackage{physics} 
\usepackage{graphicx}      
\usepackage[table]{xcolor} 
\usepackage{array}         
\usepackage{float}         
\usepackage{enumitem}

\usepackage{amsmath,amsfonts}
\usepackage{algorithmic}
\usepackage{array}
\usepackage{textcomp}
\usepackage{stfloats}
\usepackage{url}
\usepackage{verbatim}
\usepackage{graphicx}
\usepackage{balance}
\usepackage{xcolor}
\usepackage{blkarray}
\usepackage[hidelinks]{hyperref}

\DeclareUnicodeCharacter{2212}{-}
\graphicspath{{img}}

\theoremstyle{colon}

\theoremstyle{definition}
\newtheorem{proposition}{Proposition}

\newtheorem{lemma}{Lemma}
\newtheorem{remark}{Remark}
\newcommand{\I}{\boldsymbol I}
\newcommand{\0}{\boldsymbol 0}

\begin{document}
%

\title{Formation Matrix and Energy-based Control of Multi-Agent Systems}

\author{\IEEEauthorblockN{Martín Crespo and Sergio Junco and Matías Nacusse}
\IEEEauthorblockA{Automation and Control Systems Laboratory - FCEIA,\\ National University of Rosario, Argentine\\
Email: \{crespom, sjunco, nacusse\}@fceia.unr.edu.ar}}

\maketitle

\begin{abstract}
This paper presents an energy-based controller for a multiagent robotic system designed to achieve and maintain a specific formation while moving on a plane and avoiding collisions between agents. 
The controller emulates a network of elementary spring-damper modules connecting pairs of agents. 
This network, with its de-energized states representing the desired formation, determines the system's dynamics, which is fully encapsulated by a bond graph model. 
The modeling is further enhanced through the introduction of a formation matrix, using a graph-theoretic approach, that describes both  the distances and relative velocities among the agents of the arrangement. 
This matrix mathematically represents the interconnection and energy-exchange structure of the bond graph, allowing us to put it in correspondence with the control-by-interconnection CbI-scheme of the IDA-PBC theory, facilitating the solution of the formation control problem within the port-Hamiltonian system framework. 
Furthermore, the paper presents leader-following and position-based formation control systems based on the CbI scheme, including a stability analysis of the corresponding closed-loop systems. The theoretical findings are validated through numerical simulations across various scenarios.
\end{abstract}
\begin{IEEEkeywords}
Multi-agent robotic systems, formation control, port-Hamiltonian Systems, Bond Graphs, stability analysis, collision avoidance.
\end{IEEEkeywords}

\section{Introduction}



The study of multi-agent systems (MAS) has been increasingly capturing the attention of the scientific community due to their ability to collaboratively solve complex problems by breaking them down into smaller tasks. This is also thanks to their diverse range of applications, spanning from robotics and micro smart grids to telecommunications, logistics, transportation and search and rescue operations \cite{queralta2020collaborative}.
In the case of robotics MAS, the formation control problem is one of the most actively studied topics \cite{yang2021attacks}.


Graphs have been extensively used to model the interaction topology of MAS, by associating mobile robots (or agents) with the nodes and representing either the Euclidean distances or the existence of interactions between them through the edges of the graph \cite{oh2015survey}.
Several works have also considered the bearings of a graph to model the topology of a MAS and to address the problem of formation control using an angle-based approach \cite{jing2019angle} \cite{chen2020angle}.


In this paper, the edges of the graph represent the Euclidean distances between agents. 
Then we introduce the concept of \textit{Formation Matrix}, which collects the cosines and sines of the rotation angles of the edges and enables the representation of both distances and relative velocities between agents.



Our objective in this article is to provide a physically, heuristically inspired, yet theoretically supported approach to the design and analysis of the formation control problem for a MAS composed of $n$ mobile wheeled omnidirectional robots.
The work \cite{khatib1986real} was among the pioneers in employing a virtual physics-based design, introducing the concept of artificial potential fields.
Several works have considered interconnection structures based on virtual springs and dampers \cite{chen2015virtual} \cite{chen2016shape} \cite{wiech2018virtual} \cite{piemngam2019virtual}.


Among the different classifications of formation control proposed by \cite{scharf2004survey}, we recall the behavioral approach —which includes collision avoidance— and the virtual structure approach, in which the formation of agents is treated as a virtual structure.
In line with these approaches, we propose an Interconnected Physical Structure (IPS) in which the agents are coupled to each other via sets of virtual spring-damper couplings, referred here to as Elementary Coupling Block (ECB). This IPS allows for controlling the distances between agents while simultaneously avoiding collisions among them.



In this context, it is reasonable to consider, both in modeling and control, the energy phenomena associated with the physical-based laws, such as its (i) storage, (ii) flow, and (iii) dissipation.
When approaching the control problem in energy-related terms, its design can be addressed via the passivity-based control (PBC) techniques \cite{ortega2001putting} applied to the feedback-interconnection closed-loop structure.
Results on PBC of MAS are presented in \cite{li2022passivity}.


In terms of modeling techniques, the Euler-Lagrange (EL) and port-Hamiltonian Systems (pHS) models are natural candidates for describing numerous physical systems, as they capture the energy phenomena outlined above \cite{ortega2013passivityChapter1} \cite{van2000l2}.
We also consider the Bond Graphs (BG), a graphical representation that allows capturing energy phenomena as well as interconnection structure and causal
relationships between system variables, and is a highly suitable tool for both modeling and controller design \cite{karnopp2012system}. 
Indeed, controller design methods based on BG provide a powerful framework, as they enable a feasible implementation of controllers inspired by physical principles in complex systems.
In \cite{vos2014formation}, a solution to the formation control problem for a network of mobile robots with differential locomotion is presented, employing pHS models and virtual couplings.



In this paper, we specifically explore the PBC technique of Control by Interconnection (CbI), where the controller is connected to the plant through a power-preserving interconnection \cite{ortega2004interconnection}. 
BG models are used to represent the IPS and serve as a \textit{lingua franca} between the port-Hamiltonian (pH) models and CbI.

In this work, the controller state variables of the closed-loop system are related to the plant states via Casimir invariant sets \cite{garcia2005control}.
We demonstrate that this is a property produced by the geometry of the IPS. 
Casimir functions are conserved quantities of the system for any choice of the Hamiltonian, and so are completely determined by the geometry (i.e., the interconnection structure) of the system \cite{ortega2001putting}. 
Finally, as the kinematic constraints are holonomic \cite{siciliano2009robotics}, the controller can be represented by a static feedback scheme in terms of the plant states.




In comparison to the aforementioned studies, the novelty of this work is given by the introduction of the concept of the \textit{Formation Matrix} in the formulation of the control problem. 
Unlike previous approaches such as \cite{jing2019angle} and \cite{chen2020angle}, which focus solely on angle-based representations, our method simultaneously incorporates both inter-agent distances and relative velocities between agents. 
While the virtual coupling strategies in  \cite{khatib1986real}-\cite{piemngam2019virtual} adopt physically inspired models, they often lack a systematic energy-based framework. 
In contrast, this work leverages BG modeling to integrate pH representations with the CbI technique, providing a structured PBC architecture in which the \textit{Formation Matrix} plays a key role in modulating the power exchange between the ECBs and the agents.
Furthermore, unlike \cite{chopra2006passivity} and \cite{vos2014formation},  where energy-based models are decoupled from interaction geometry, our proposed IPS inherently leads to Casimir invariants sets, enabling a static feedback interpretation of the MAS due to its geometric design and holonomic constraints. This integration results in a more comprehensive and physically consistent approach to the design and analysis of MAS formation control.


The remainder of the paper is organized as follows. 
In Section \ref{sec:preli}, we provide some preliminaries on pHS, graph theory and agent models. We also establish the notation and formally state the problem. 
In Section \ref{sec:formationmatrix}, we define the \textit{Formation Matrix} and demonstrate some of its properties. 
Section \ref{sec:phy_strucuture} addresses the models of both the ECB and the MAS, initially modeling the agents as point masses, and later extending the results to rigid-body agents models.
In Section \ref{sec:FC}, some properties related to the CbI and the \textit{Casimir} functions are presented. 
In Section \ref{sec:LAC} and Section \ref{sec:poscontrol}, two different formation control strategies are presented using the proposed IPS. A stability analysis is included, also showing that the control law remains valid even when modeling the robots as rigid bodies instead of point masses.
Finally, Section \ref{sec:sim} showcases simulation results and Section \ref{sec:Con} addresses concluding remarks and discusses future research.

\section{Preliminaries}\label{sec:preli}

This section introduces the notation, an overview of the pHS and graph theory, the model of the agents to be used, and the problem statement. These concepts will be used throughout the remainder of the paper.

\subsection{Notation}

We denote the $n$-dimensional identity matrix by $\I_n$ and $\0_{n\times m}$ represents a matrix of zeros of size $n$ by $m$. 
$A=\text{diag}\{[a_1,\cdots, a_n]\}$ denotes the block diagonal matrix such that $a_n$ is its $n$-th diagonal element. 
For a scalar function $H(x):\mathbb{R}^n\rightarrow \mathbb{R}$, $\nabla H=\frac{\partial H}{\partial x}$ denotes the column vector of partial
derivatives of the function $H(x)$ with respect to $x$.
Given two matrices $A$ and $B$, $A\otimes B$ denotes the Kronecker product of the matrices.

\subsection{Port-Hamiltonian systems}
The port-Hamiltonian (pH) dynamics is given by
\begin{equation*}
\Sigma_{(u,y)}:\begin{cases}
\dot{x}&=\left[J(x)-R(x)\right]\nabla H(x)+g(x)u\\
y&=g^T(x)\nabla H(x)
\end{cases}
\end{equation*}
where $x\in\mathbb{R}^n$ is the state vector, 
$u\in\mathbb{R}^m$ is the input, with $m\leq n$, 
the Hamiltonian $H(x):\mathbb{R}^n\rightarrow \mathbb{R}$ is the total stored energy, 
$J(x):\mathbb{R}^n\rightarrow \mathbb{R}^{n\times n}$ is the skew-symmetric interconnection matrix, 
$R(x):\mathbb{R}^n\rightarrow \mathbb{R}^{n\times n}$ the symmetric and positive semi–definite dissipation matrix, 
$y\in\mathbb{R}^m$ is the natural output as $u^Ty$ has units of power and $g(x):\mathbb{R}^n\rightarrow \mathbb{R}^{n\times m}$ is assumed full rank \cite{van2016l2}.

\subsection{Graph Theory}
A graph $\mathcal{G(V,E)}$ is an ordered pair with $\mathcal{V}=\{1,2,\ldots,n \}$ distinct nodes and $\mathcal{E}\subset\mathcal{V}\times\mathcal{V}$ the set of edges, being $|\mathcal{E}|=m$. A directed edge $(i,j)\in\mathcal{E}$ is an ordered pair of two distinct nodes. A graph $\mathcal{G(V,E)}$ is directed if consists of directed edges. 
In a graph $\mathcal{G}$ a node $i$ is called a neighbor of node $j$ if the link $(i,j)$ exists in the graph $\mathcal{G}$ \cite{godsil2001algebraic}.

One of the several matrices that can be associated with a graph is the incidence matrix. Let $\mathcal{G(V,E)}$ be a directed graph. Assigning arbitrary orientation to the edges, considering one of the nodes of an edge as the positive end and the other as the negative end, the incidence matrix $D \in \mathbb{R}^{n\times m}$ is defined as
\begin{equation*}
D_{ik}=\begin{cases}
-1 \quad \textrm{if node i is the positive end of edge k}\\
1 \quad \textrm{if node i is the negative end of edge k}\\
0\quad \textrm{otherwise}
\end{cases}
\end{equation*}

Let's consider a set of robots modeled as point masses deployed on a horizontal plane. Before addressing their interaction, we use graph theory to describe some geometrical aspects of this arrangement. We assign the nodes and edges the following attributes: each node is assigned the Cartesian position of a robot, and each edge is assigned a length defined as the distance between a pair of agents.
More formally, assume that  $i$ and $j$ are two neighbor nodes, being $i$ the positive end of edge $k$. 
Let $r_i={\left[ x_i,\;y_i\right]}^T\in \mathbb{R}^2$ and $r_j={\left[ x_j,\;y_j\right]}^T\in \mathbb{R}^2$ be the position of the nodes $i$ and $j$ respect to a fixed reference frame, respectively.
The displacement $\Delta r_k={\left[ \Delta r_{xk},\;\Delta r_{yk}\right]}^T\in\mathbb{R}^2$ with respect to a fixed reference frame is $\Delta r_k=r_i-r_j$.
In this way, from the definition of $D$, the displacement between all the neighboring nodes $\Delta r={\left[\Delta r_1^T,\ldots,\Delta r_m^T \right]}^T\in\mathbb{R}^{2m}$ can be expressed as 
\begin{equation*}
\Delta r=\left(D^T \otimes\I_2 \right)r
\end{equation*}
where $r=[r_1^T,\cdots, r_n^T]^T\in\mathbb{R}^{2n}$. 
Finally, the Euclidean distance $z_k$ between nodes $i$ and $j$ can be written as
\begin{equation}\label{eq:distance}
z_k=\sqrt{\Delta r_k^T\Delta r_k}
\end{equation}

\subsection{Mobile Base Models}

In this section, we present the two types of models that we will use in this work.

\subsubsection{Point Mass (State-equation) Model}
The dynamics of agent $i$ is
\begin{equation*}
\begin{cases}
\dot{r}_i&=M_i^{-1}p_i\\
\dot{p}_i&=u_i
\end{cases}
\end{equation*}
where 
$p_i=[p_{ix},\; p_{iy}]^T\in \mathbb{R}^2$ denotes the momentum, $M_i=\text{diag}\{[m_i,\; m_i]\}$, with $m_i$ the mass of the agent, is its mass matrix, and $u_i=[u_{ix},\; u_{iy}]^T\in \mathbb{R}^2$ is the control input of agent $i$.
Thus, the dynamics of $n$ agents is expressed as
\begin{equation}\label{eq:pto_mat_din}
\begin{cases}
\dot{r}&=M^{-1}p\\
\dot{p}&=u
\end{cases}
\end{equation}
where $p=[p_1^T,\cdots, p_n^T]^T\in\mathbb{R}^{2n}$, $u=[u_1^T,\cdots, u_n^T]^T\in\mathbb{R}^{2n}$, and $M=\text{diag}\{[M_1,\cdots, M_n]\}$.

\subsubsection{Rigid Body (Euler-Lagrange) Model}

To fully characterize the configuration of the omnidirectional mobile base $i$ in the plane, we must extend the point mass model to a rigid body model specifying not only its position $r_i$ but also its rotation angle $\varphi_i$, so that the generalized coordinate vector of the rigid body model is $q_i={\left[ x_i,\;y_i,\;\varphi_i\right]}^T\in \mathbb{R}^3$.

The model of a single rigid-body platform, say $i$, enhanced with the model of its locomotion system, composed of three omni-wheels symmetrically deployed at 120° each other, reads as follows
\begin{equation*}
M_{bi}\ddot{q_i}+C_i(q_i,\dot{q}_i)\dot{q}_i= G_i(q_i)\tau_i
\end{equation*}
where 
$C_i(q_i,\dot{q}_i)$ is the so-called Coriolis matrix, whose product with $\dot{q}_i$ represents the centrifugal and Coriolis efforts present in the model, and 
$G_i(q_i)$ is the matrix relating the wheel torques $\tau_i={\left[ \tau_{r1i},\;\tau_{r2i},\;\tau_{r3i}\right]}^T\in \mathbb{R}^3$ to efforts at the center of mass of the base with expressions:
\begin{equation*}\label{eq:MatrizMmatrizC}
\begin{gathered}
C_i(q_i,\dot{q}_i)=\begin{bmatrix}
       0 & \frac{3J_{r_i}}{2R_i^2}\dot{\varphi}_i &0 \\[0.3em]
       -\frac{3J_{r_i}}{2R_i^2}\dot{\varphi}_i & 0 & 0 \\[0.3em]
       0 & 0 & 0
\end{bmatrix}\\
G_i(q_i)=\frac{1}{R_i}\begin{bmatrix}
       c_{\varphi_i+\frac{\pi}{3}} & -c_{\varphi_i} & c_{\varphi_i-\frac{\pi}{3}} \\[0.3em]
       s_{\varphi_i+\frac{\pi}{3}} & -s_{\varphi_i} & -c_{\varphi_i+\frac{\pi}{6}} \\[0.3em]
       L_i & L_i & L_i
\end{bmatrix}
\end{gathered}
\end{equation*} 
and $M_{bi}=\text{diag}\{[\frac{3J_{r_i}}{2R_i^2}+m_{b_i},\frac{3J_{r_i}}{2R_i^2}+m_{b_i}, \frac{3J_{r_i}L_i^2}{2R_i^2}+I_{b_i}]\}$ is the symmetric and positive definite inertia matrix,
where $R_i$ is the radius of the wheels,
$L_i$ is the distance from the center of mass of the chassis to the wheel axis,
$J_{r_i}$ is the rotational inertia of the wheels, and
$m_{b_i}$ and $I_{b_i}$ are the mass and rotational inertia of the mobile base of agent $i$, respectively.

The EL model of $n$ agents characterized as mobile bases is 
\begin{equation}\label{eq:MB_dynamic_equation}
M_b\ddot{q}+C(q,\dot{q})\dot{q}= G(q)\tau
\end{equation}
where 
$M_b=\text{diag}\{[M_{b1}, \; \cdots ,\;M_{bn}]\}$,
$q=\left[q_1^T,\cdots,q_n^T\right]^T\in\mathbb{R}^{3n}$, 
$C(q,\dot{q})=\text{diag}\{[C_1(q_1,\dot{q}_1), \; \cdots ,\;C_n(q_n,\dot{q}_n)]\}$,
$G(q)=\text{diag}\{[G_1(q_1), \; \cdots ,\;G_n(q_n)]\}$, and
$\tau=\left[\tau_1^T,\cdots,\tau_n^T\right]^T\in\mathbb{R}^{3n}$.

\subsection{Problem Statement and Solution Approach}

In this paper we tackle two objectives: (i) the distance control and (ii) the collision avoidance between agents.
For that we propose a physically inspired strategy, which emulates the interconnection among the $n$ agents using $m$ sets of virtual springs and dampers, a collection of single sets that we define as Elementary Coupling Block (ECB), resembling the IPS shown in Fig.~\ref{fig:IF}.

The mesh of ECB's enables the achievement of both proposed objectives.
On one hand, since the forces of the springs vanish when they are de-energized, selecting their lengths allows establishing the desired distances between agents.  
On the other hand, as the spring forces are set to tend to infinity when the distances approach zero, they serve as a collision avoidance strategy.
Finally, when the distance between agents exceeds the desired value, the springs apply a force to attract them. The dampers, for their part, attenuate the oscillations of the mesh.

A virtual coupling should emulate what a real physical coupling would produce if it were present. 
To implement these virtual couplings, the efforts that the physical couplings would generate are translated into force references for the agent actuators (the wheel torques generated by the motors) which are responsible for producing the actual efforts.

\begin{figure}[b!]
\centering
\includegraphics[width=\columnwidth]{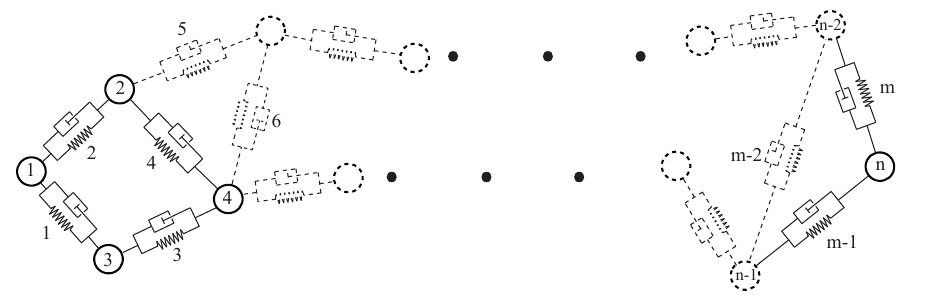}
\caption{\label{fig:IF}IPS with $n$ agents and $m$ ECBs.}
\end{figure}

\section{Formation Matrix}\label{sec:formationmatrix}

In this section, we introduce the \textit{Formation Matrix} concept, which will be crucial for modeling the IPS of Fig. \ref{fig:IF}. 
It serves to compactly express both the dependence of the edges' lengths on the robots' positions and the rates of change of the lengths with respect to the robot velocities, i.e., the kinematic and differential kinematic constraints of the setup.

Let's consider any two neighbor nodes in a graph $\mathcal{G}$, as shown in Fig. \ref{fig:Rotation}.
The euclidean distance defined in (\ref{eq:distance}) can be rewritten as
\begin{equation*}
\begin{split}
z_k&=\begin{bmatrix}
       \cos\alpha_k,\;\sin\alpha_k\\
\end{bmatrix}\Delta r_k
\end{split}
\end{equation*}
where
\begin{equation*}
\cos\alpha_k=\dfrac{\Delta r_{xk}}{z_k} \qquad \sin\alpha_k=\dfrac{\Delta r_{yk}}{z_k}
\end{equation*}
and $\alpha_k$ is the angle formed by the edge $k$ with the $x$-axis of the fixed reference frame. Note that the cosine and sine functions can be expressed as functions of the positions $r$, e.g., the cosine of edge $k=1$ in Fig. \ref{fig:IF} is:
\begin{equation*}
\begin{split}
\cos\alpha_1&=\frac{\Delta r_{x1}}{\sqrt{\Delta r_1^T\Delta r_1}}\\
&=\frac{x_3-x_1}{\sqrt{(r_3-r_1)^T(r_3-r_1)}}
\end{split}
\end{equation*}

\begin{figure}[t!]
\begin{centering}
\includegraphics[width=0.7\columnwidth]{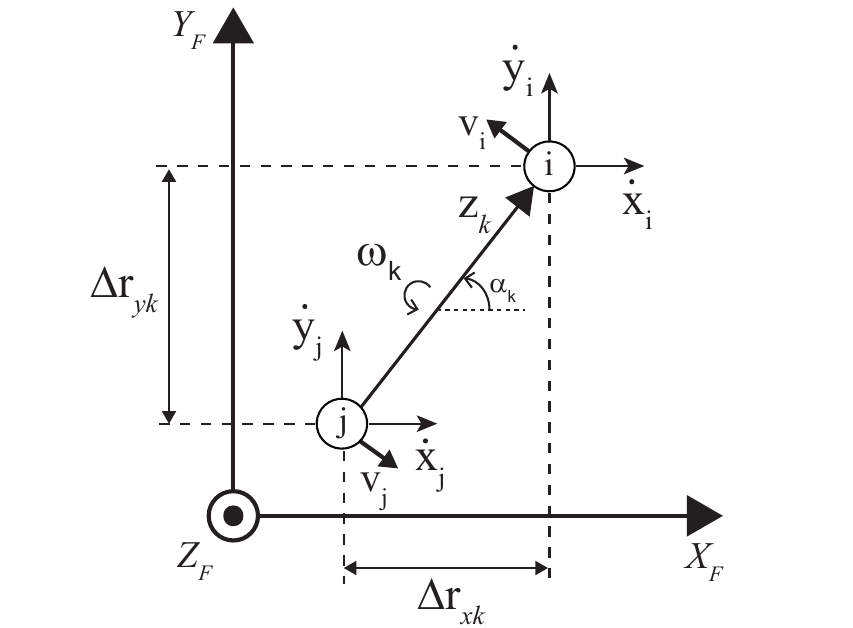}
\par\end{centering}
\caption{\label{fig:Rotation}Neighbor nodes $i$ and $j$ of graph $\mathcal{G}$.}
\end{figure}

In this way, the vector of distances between all the neighboring nodes of a graph can be expressed as
\begin{equation}\label{eq:distance_2}
z=\mathcal{F}(r)r
\end{equation}
where $z={\left[z_1,\cdots,z_m\right]}^T\in\mathbb{R}^{m}$ and $\mathcal{F}(r)$ is the \textit{Formation Matrix} defined as
\begin{equation}\label{eq:F}
\mathcal{F}(r):=\beta(r)^T\left(D^T\otimes \I_2 \right)
\end{equation}
being%
\footnote{The notations $c_k$ and $s_k$ are the abbreviations for $\cos\alpha_k$ and $\sin\alpha_k$, respectively.}
\begin{equation*}
\begin{split}
\beta(r)&=\begin{bmatrix}
       c_1 & 0 & \cdots & 0 \\[0.3em]
	   s_1 & 0 & \cdots & 0 \\[0.3em]
	   0 & c_2 & \cdots & 0 \\[0.3em]
	   0 & s_2 & \cdots & 0 \\[0.3em]
	   \vdots & \vdots & \vdots & \vdots  \\[0.3em]
	   0 & 0 & 0 & c_m \\[0.3em]
	   0 & 0 & 0 & s_m \\[0.3em]
     \end{bmatrix} 
\end{split} 
\end{equation*}
a matrix of dimension $({2m\times m})$.

On the other hand, the time derivative of (\ref{eq:distance}) is
\begin{equation*}
\begin{split}
\dot{z}_k&=\dfrac{\Delta r_{xk}\dot{\Delta} r_{xk}+\Delta r_{yk}\dot{\Delta} r_{yk}}{\sqrt{\Delta r_{xk}^2+\Delta r_{yk}^2}}\\
&=\begin{bmatrix}
       c_k,\;s_k\\
\end{bmatrix}
\dot{\Delta} r_k
\end{split}
\end{equation*}
Using the same approach as before, considering that $\dot{\Delta} r=\left(D^T \otimes\I_2 \right)\dot{r}$, the differentiated values of $z$ can be expressed as
\begin{equation}\label{eq:dot_z}
\dot{z}=\mathcal{F}(r)\dot{r}
\end{equation}

The \textit{Formation Matrix} $\mathcal{F}(r)$ describes the linear mapping between the agent velocities $\dot{r}$ and the differentiated distance values $\dot{z}$, as illustrated in Fig.~\ref{fig:Mapping}.
As can be appreciated, the range space of $\mathcal{F}(r)$ is the subspace $\mathcal{R}(F) \subset \mathbb{R}^{m}$ of differentiated distance values $\dot{z}$ that can be generated by the agent velocities $\dot{r}$. The null space of $\mathcal{F}(r)$ is the subspace $\mathcal{N}(F) \subset \mathbb{R}^{2n}$ of agent velocities that do not generate changes in the distances $z$.

A distance-based control strategy exploiting the null space $\mathcal{N}(F)$ of the \textit{Formation Matrix} $\mathcal{F}(r)$ was presented by the authors in \cite{crespo2025distance}.

\begin{figure}[H]
\begin{centering}
\includegraphics[width=0.8\columnwidth]{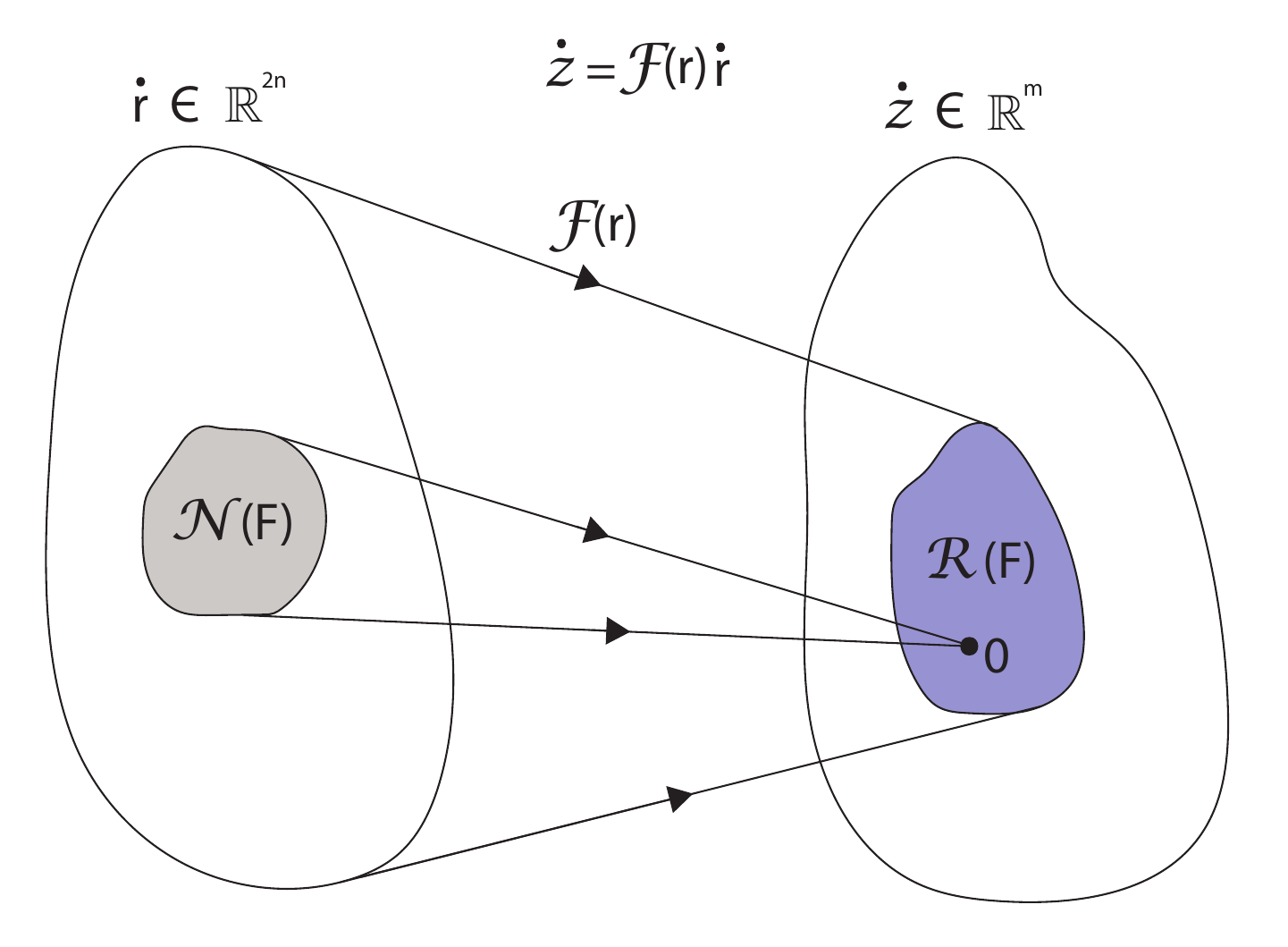}
\par\end{centering}
\caption{\label{fig:Mapping}Mapping between the agent velocities $\dot{r}$ and the differentiated distance values $\dot{z}$.}
\end{figure}

\begin{lemma}\label{prop:partial_Fq}
The following equality holds
\begin{equation*}
\left(\dfrac{\partial \left( \mathcal{F}(r)r\right)}{\partial  r}\right)^T=\mathcal{F}(r)
\end{equation*}
\end{lemma}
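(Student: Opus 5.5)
The plan is to use \eqref{eq:distance_2}, which identifies $\mathcal{F}(r)r$ with the vector of edge lengths $z(r)=[z_1(r),\ldots,z_m(r)]^T$, where $z_k(r)=\sqrt{\Delta r_k^T\Delta r_k}$ as in \eqref{eq:distance}. This is an identity of functions of $r$, valid wherever all $z_k\neq 0$, i.e., on the domain where $\mathcal{F}(r)$ is defined. Hence we may differentiate $z(r)$ directly rather than the product $\mathcal{F}(r)r$. The latter would require handling the derivatives of the cosines and sines in $\beta(r)$, terms that must ultimately cancel.

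First, we compute the gradient of a single component. By the chain rule, with $\Delta r_k=(D^T\otimes \I_2)_k\, r$, where $(D^T\otimes\I_2)_k$ denotes the $k$-th block row of size $2\times 2n$, we get
\begin{equation*}
\frac{\partial z_k}{\partial r}=\frac{\partial z_k}{\partial \Delta r_k}\,\frac{\partial \Delta r_k}{\partial r}=\frac{\Delta r_k^T}{z_k}\,(D^T\otimes \I_2)_k=[c_k,\; s_k]\,(D^T\otimes \I_2)_k .
\end{equation*}
Next, we stack these rows for $k=1,\ldots,m$. The row vectors $[c_k,\,s_k]$ placed block-diagonally form exactly $\beta(r)^T$. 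Therefore the Jacobian $\partial z/\partial r$, the $m\times 2n$ matrix with rows $\partial z_k/\partial r$, equals $\beta(r)^T(D^T\otimes\I_2)=\mathcal{F}(r)$, by \eqref{eq:F}.

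This is essentially the same computation that produced \eqref{eq:dot_z}. Indeed, $\dot z=(\partial z/\partial r)\dot r$ holds for all $\dot r$, and comparing with \eqref{eq:dot_z} yields the same identity. That comparison could serve as an alternative one-line argument.

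The only real subtlety, and the main obstacle, is the transpose convention in the statement. If $\partial(\cdot)/\partial r$ denotes the Jacobian in the row convention, then the computation above gives the Jacobian equal to $\mathcal{F}(r)$ itself. The transposed form in the lemma corresponds instead to the paper's gradient convention, in which $\nabla H=\partial H/\partial x$ is a column vector. Under that convention, $\partial(\mathcal{F}r)/\partial r$ stacks the column gradients $\nabla z_k$ side by side, giving a $2n\times m$ matrix equal to $\mathcal{F}(r)^T$, and its transpose is $\mathcal{F}(r)$. The proof will therefore state explicitly which convention is adopted, matching the paper's notation for $\nabla H$, and note that the lemma holds on the set where no two neighboring agents coincide.
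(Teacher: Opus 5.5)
Your proof is correct and is essentially the paper's argument. The paper differentiates $z=\mathcal{F}(r)r$ in time and matches $\bigl(\partial(\mathcal{F}(r)r)/\partial r\bigr)^T\dot r$ against (\ref{eq:dot_z}), which is exactly the one-line alternative you mention; your main line does the same chain-rule computation, $\partial z_k/\partial\Delta r_k=[c_k,\;s_k]$, directly in $r$. That makes explicit the ``for every $\dot r$'' step the paper's comparison leaves implicit, and your remarks on the column-gradient convention and the domain $z_k\neq 0$ correctly account for the transpose and for where $\mathcal{F}(r)$ is defined.
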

\begin{proof}
Differentiating equation (\ref{eq:distance_2}) with respect to time and applying the chain rule, we obtain
\begin{equation*}
\begin{split}
\dot{z}&=\dfrac{d}{dt}z \\
&=\dfrac{d}{dt}\left(\mathcal{F}(r)r\right)\\
&=\left(\dfrac{\partial \left( \mathcal{F}(r)r\right)}{\partial  r}\right)^T \dot{r}
\end{split}
\end{equation*}
By comparing this with equation (\ref{eq:dot_z}) we derive Lemma \ref{prop:partial_Fq}.
\end{proof}

\begin{proposition}\label{prop:holonomic}
Defining the vector $\psi:=[r^T,\;z^T]^T\in\mathbb{R}^{2n+m}$, the $m$ kinematic constraints of equations (\ref{eq:dot_z}) can be expressed in \textit{Pfaffian} kinematic matrix form
\begin{equation}\label{eq:restric_cinema_F}
A(\psi)^T\dot{\psi}=\0_{m\times 1}
\end{equation}
with 
\begin{equation*}
A(\psi)=\begin{bmatrix}
       \mathcal{F}(r)^T\\[0.3em]
		-\I_m
\end{bmatrix}
\end{equation*}
which are holonomic (or \textit{integrable}).
\end{proposition}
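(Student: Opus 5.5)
The plan has two parts. First I check that the constraints (\ref{eq:dot_z}) take the stated Pfaffian form. Then I show holonomy by exhibiting $m$ explicit functions whose level sets are exactly the configurations compatible with the constraints.

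\emph{Pfaffian form.} With $\psi=[r^T,\;z^T]^T$ and $A(\psi)$ as given, block multiplication gives
\begin{equation*}
A(\psi)^T\dot{\psi}=\begin{bmatrix}\mathcal{F}(r) & -\I_m\end{bmatrix}\begin{bmatrix}\dot r\\ \dot z\end{bmatrix}=\mathcal{F}(r)\dot r-\dot z .
\end{equation*}
So (\ref{eq:restric_cinema_F}) is literally (\ref{eq:dot_z}) rewritten. The constraints are also linear in $\dot\psi$, as a Pfaffian form requires. I will note that $A(\psi)$ has full column rank $m$ because of the $-\I_m$ block, so the $m$ constraints are independent.

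\emph{Holonomy.} I define the $m$ scalar functions
\begin{equation*}
h(\psi):=\mathcal{F}(r)r-z\in\mathbb{R}^m ,
\end{equation*}
whose $k$-th entry is $h_k(\psi)=\sqrt{\Delta r_k^T\Delta r_k}-z_k$ by (\ref{eq:distance_2}). These are smooth wherever $z_k\neq 0$, i.e.\ away from collisions, which is precisely where $\mathcal{F}(r)$ is defined. Lemma~\ref{prop:partial_Fq} gives
\begin{equation*}
\left(\dfrac{\partial h}{\partial \psi}\right)^T=\begin{bmatrix}\left(\dfrac{\partial(\mathcal{F}(r)r)}{\partial r}\right)^T & -\I_m\end{bmatrix}=\begin{bmatrix}\mathcal{F}(r) & -\I_m\end{bmatrix}=A(\psi)^T ,
\end{equation*}
so $A(\psi)^T\dot\psi=\frac{d}{dt}h(\psi)$. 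Therefore (\ref{eq:restric_cinema_F}) is equivalent to $h(\psi(t))=\text{const}$. On the set consistent with the definition (\ref{eq:distance_2}) this constant is zero, giving the geometric constraints $h(\psi)=\0_{m\times1}$. A Pfaffian constraint whose rows are exact differentials of functions of $\psi$ alone is by definition integrable, which is the claim.

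\emph{Main obstacle.} The only delicate point is the domain. The map $r\mapsto z$ fails to be differentiable when two neighbors coincide ($z_k=0$), where $c_k,s_k$ are undefined. I will state the result on the open set $\{z_k>0,\ k=1,\dots,m\}$, which is the relevant one since the collision-avoiding springs keep trajectories there. An alternative check is that each row $a_k$ of $A^T$ is exact, so Frobenius' condition holds trivially. Exhibiting $h$ directly is the cleaner route, and I would present that.
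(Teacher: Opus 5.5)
Your proof is correct and follows essentially the same route as the paper: both define $h(\psi)=\mathcal{F}(r)r-z$, use Lemma~\ref{prop:partial_Fq} to identify $\left(\partial h/\partial\psi\right)^T$ with $A(\psi)^T$, and conclude that the Pfaffian constraints are exact differentials of $m$ geometric constraints, and you handle the integration constant slightly more carefully than the paper does. One small fix to your domain: since $z$ is an independent coordinate of $\psi$, $h$ is smooth on $\{\psi : \Delta r_k\neq 0,\ k=1,\dots,m\}$ rather than on $\{z_k>0\}$, and the two sets agree only on the constraint set $h(\psi)=0$.
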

\begin{proof}
The existence of the $m$ holonomic constraints (\ref{eq:distance_2}) can be expressed in the form
\begin{equation}\label{eq:restric_holo_F}
\begin{split}
h(\psi)&=\mathcal{F}(r)r-z\\
&=\0_{m\times 1}
\end{split}
\end{equation}
and implies an equal number of kinematic constraints:
\begin{equation}\label{eq:restric_cine_h}
\begin{split}
\dfrac{d h_i(\psi)}{dt}&=\dfrac{\partial h_i(\psi)}{\partial \psi}\dot{\psi}=0 \qquad i=1,\cdots,m
\end{split}
\end{equation}
In this case, considering Lemma \ref{prop:partial_Fq}, it is satisfied that
\begin{equation*}
\begin{split}
\dfrac{d}{dt}h(\psi)&=\left(\dfrac{\partial h(\psi)}{\partial \psi}\right)^T\dot{\psi}\\
&=\left[\left(\dfrac{\partial \left( \mathcal{F}(r)r\right)}{\partial  r}\right)^T\quad -\left(\dfrac{\partial z}{\partial z}\right)^T \right]\dot{\psi}\\
&=\left[\mathcal{F}(r)\quad -\I_m \right]\dot{\psi}\\
&=A(\psi)^T\dot{\psi} \quad \Rightarrow \quad \int A(\psi)^T\dot{\psi}=h(\psi)
\end{split}
\end{equation*}
Since the $m$ kinematic constraints in equation (\ref{eq:restric_cinema_F}) are integrable and can be expressed as $m$ constraints of the form (\ref{eq:restric_holo_F}), we conclude that they are holonomic \cite{ne_mark2004dynamics}.
\end{proof}

\begin{proposition}\label{pro:RankF}
If the graph $\mathcal{G}$ with $n$ nodes and $m$ edges is minimally rigid, then the \textit{Formation Matrix} $\mathcal{F}(r)$ has full rank equal to $m=2n-3$ for a non-degenerate configuration.
\end{proposition}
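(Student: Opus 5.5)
The plan is to identify $\mathcal{F}(r)$ with the rigidity matrix of the framework $(\mathcal{G},r)$ up to a positive row scaling, and then invoke the standard characterization of minimally rigid graphs (Laman's theorem together with the Asimow--Roth infinitesimal rigidity results).

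\emph{Step 1: row $k$ of $\mathcal{F}(r)$.} By the definition (\ref{eq:F}), row $k$ of $\mathcal{F}(r)=\beta(r)^T(D^T\otimes\I_2)$ is $[c_k,\;s_k]$ applied to the $k$-th block row of $D^T\otimes\I_2$. Suppose edge $k$ joins nodes $i$ and $j$, with $\Delta r_k=r_i-r_j$. Then this row has the $1\times 2$ block $(r_i-r_j)^T/z_k$ in the columns of node $i$, the block $-(r_i-r_j)^T/z_k$ in the columns of node $j$, and zeros elsewhere.

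\emph{Step 2: relation to the rigidity matrix.} The rigidity matrix $R(r)\in\mathbb{R}^{m\times 2n}$ has exactly these blocks without the factor $1/z_k$. Hence
\begin{equation*}
\mathcal{F}(r)=Z^{-1}R(r),\qquad Z=\text{diag}\{[z_1,\cdots,z_m]\}.
\end{equation*}
Here a non-degenerate configuration requires in particular that no two neighbors coincide, so $z_k>0$ for all $k$ and $Z$ is invertible. Therefore $\text{rank}\,\mathcal{F}(r)=\text{rank}\,R(r)$.

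\emph{Step 3: edge count.} Minimal rigidity in the plane means, by Laman's theorem, that $m=2n-3$ and that every subgraph on $n'\ge 2$ nodes has at most $2n'-3$ edges. Since $\mathcal{F}(r)$ has only $m$ rows, its rank is at most $m$. Full rank therefore means that the rows are linearly independent.

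\emph{Step 4: the main obstacle.} The key step is showing that $R(r)$ has rank $2n-3$ at the configurations we allow. The kernel of $R(r)$ always contains the three-dimensional space of infinitesimal rigid motions $\{(\omega S r_i+v)_i\}$, where $S$ is the $90^\circ$ rotation matrix. This already gives $\text{rank}\le 2n-3$.

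Equality holds exactly when the framework is infinitesimally rigid. For a generically rigid graph, the Asimow--Roth theorem gives this on the open dense set of generic configurations. I would therefore define ``non-degenerate'' as a configuration at which the framework is infinitesimally rigid, for example a generic one. Under that definition, $\text{rank}\,R(r)=2n-3=m$, and by Step 2 $\mathcal{F}(r)$ has full row rank $m$.

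The delicate point is justifying this definition. Collinear or otherwise special placements can drop the rank, so the hypothesis ``non-degenerate'' must carry exactly this content. I would support it with a remark that the rank-deficient set is the zero set of the $m\times m$ minors of $R(r)$. This set is a proper algebraic subset, nonempty-complement by Laman's theorem, and hence has measure zero.

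A self-contained alternative, if citing Asimow--Roth is undesirable, is induction on a Henneberg construction of the minimally rigid graph. At each type-I step, adding a node joined to two existing nodes adds two rows with two new columns. Those columns form the $2\times 2$ block with rows $(r_{\mathrm{new}}-r_a)^T$ and $(r_{\mathrm{new}}-r_b)^T$. This block is nonsingular unless the three points are collinear, so rank increases by two. Type-II (edge-split) steps need a genericity argument, which is where the same obstacle reappears.
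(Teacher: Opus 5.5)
Your proof is correct, but it takes a different route from the paper's. The paper argues purely by counting. It restates the Laman condition ($m=2n-3$ and at most $2n'-3$ edges on any $n'$ nodes), asserts that on $n'$ nodes the lengths of $2n'-3$ edges can be ``independently defined'', and concludes directly that the $m$ rows of $\mathcal{F}(r)$ are linearly independent. It never identifies $\mathcal{F}(r)$ with the rigidity matrix and never says what ``non-degenerate'' means. The passage from combinatorial sparsity to linear independence of the rows at a specific $r$ is therefore left implicit. That passage is the hard (sufficiency) direction of Laman's theorem together with a genericity statement.

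You make exactly that passage explicit. Writing $\mathcal{F}(r)=Z^{-1}R(r)$ reduces the claim to the rank of the rigidity matrix, the rigid-motion kernel gives the bound $2n-3$, and Laman plus Asimow--Roth give equality on the open dense set of generic (infinitesimally rigid) configurations. That is precisely the content the hypothesis ``non-degenerate'' has to carry.

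Your route buys rigor and a precise hypothesis. It also shows why the paper's count cannot suffice on its own: a minimally rigid graph can lose rank at special positions with no three points collinear. An example is the triangular prism ($n=6$, $m=9$) with two homothetic triangles, where rotating the inner triangle about the homothety center is a nontrivial infinitesimal flex. So ``non-degenerate'' cannot simply mean ``no collinear triples''.

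The cost of your route is dependence on two nontrivial cited theorems. Your Henneberg alternative is self-contained only for type-I steps; the edge-split steps still need the usual special-position-plus-genericity argument, as you already note.
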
 
\begin{proof}
A graph $\mathcal{G}$ with $n\geq 2$ nodes is minimally rigid if and only if it has $2n - 3$ edges and each subset of $n' \leq n$ nodes has no more than $2n' - 3$ edges \cite{mesbahi2010graph}.

For a subset of $n' = 2$ nodes, it is possible to define independently the length of only one edge.
Generally, for each subset of $n' \leq n$ nodes, it is possible to independently define the length of $2n' - 3$ edges.
If $\mathcal{G}$ is minimally rigid, each subset of $n' \leq n$ nodes has no more than $2n' - 3$ edges, so the \textit{Formation Matrix} $\mathcal{F}(r)$ has $m=2n-3$ linearly independent rows. 
Then $\mathcal{F}(r)$ has full rank equal to $m$.
\end{proof}

\section{Interconnected Physical Structure}\label{sec:phy_strucuture}

In this section, we present the pH and BG energy-based models for both a single ECB and the MAS. These models underscore the energy storage, the interconnection structure, and the dynamic behavior of the spring of the ECB.

\subsection{ECB Models}

Next, we present the pH and BG model of the ECB.

\subsubsection{pH ECB-Model}
Let's consider a new variable $\tilde{z}_k$, defined as the difference between the length $z_k$ of the $k$-th virtual spring and its de-energized value $z_k^*$, i.e.,  
\begin{equation*}
\tilde{z}_k:=z_k-z_k^*
\end{equation*}

The effort $f_{rk}$ exerted by the virtual springs of each ECB$_k$ is characterized by a barrier function that tends to infinity as the distance approaches zero.  
This results in the spring operating in compression with negative forces $f_{rk}$ tending to infinity as the distance $z_k$ approaches zero, thereby repelling agents.  
For distances $z_k$ greater than the reference $z_k^*$, the virtual spring operates in traction, exerting positive forces that attract both agents.
It can be expressed as
\begin{equation*}
\begin{split}
f_{rk}(\tilde{z}_k; z_k^*)&=\dfrac{\partial H_{ck}}{\partial \tilde{z}_k}
\end{split}
\end{equation*}
where 
\begin{equation}\label{eq:Hamitonian_ECB}
H_{ck}(\tilde{z}_k)=H_{sk}(\tilde{z}_k)+\Phi_{sk}(z_k^*)
\end{equation}
is the Hamiltonian function of the ECB$_k$ that corresponds to the total (potential) energy of the spring.
The value of the constant term $\Phi_{sk}(z_k^*)$ is chosen so that the Hamiltonian has a minimum at $z_k=z_k^*$.

In this way the Hamiltonian model of the ECB$_k$ is
\begin{equation}\label{eq:model_ECBk}
\begin{cases}
\dot{\tilde{z}}_k&=\zeta_k\\
f_k&=\dfrac{\partial H_{ck}}{\partial  \tilde{z}_k}+b_k\zeta_k
\end{cases}
\end{equation}
where the parameter $b_k$ is a linear damping coefficient, the input $\zeta_k$ the change rate of the length of the virtual spring, and the output $f_k$ the total effort exerted by the ECB$_k$.

\begin{remark}
Since the length of the de-energized spring can be interpreted as the desired value $z_k^*$, is logic to consider it constant so that $\dot{z}_k^*=0$. 
\end{remark}

\subsubsection{BG ECB-Model}

The BG model consists of \textbf{R} and \textbf{C} elements, linked by a 1-junction, representing the dissipation and potential energy-storing phenomena, respectively.
The causality convention defines the velocity $\zeta_k$ as the input and the effort $f_k$ as the output.
The corresponding BG model of the ECB$_k$ is shown in Fig. \ref{fig:BG_ECB}.

\begin{figure}[H]
\begin{centering}
\includegraphics[width=0.8\columnwidth]{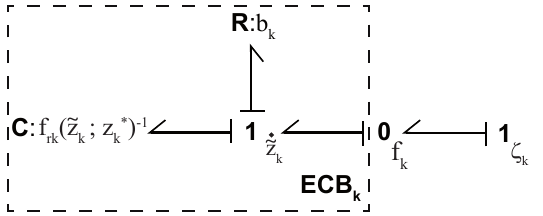}
\par\end{centering}
\caption{\label{fig:BG_ECB}BG scheme of the ECB$_k$.}
\end{figure}

\subsection{MAS Models}

In this section we present the pH and BG model of the IPS depicted in Fig \ref{fig:IF}.

\subsubsection{pH MAS-Model}

The effort $f_k$ exerted by the ECB$_{k}$, considering that the node $i$ is the positive end of edge $k$ and $j$ the negative one (see Fig. \ref{fig:Rotation}), can be decoupled in 
\begin{equation*}
\begin{split}
\begin{bmatrix}
       F_{i}\\[0.3em]
	   F_{j}\\[0.3em]
\end{bmatrix}
&=-\left(\begin{bmatrix}
       1 \\[0.3em]
       -1
\end{bmatrix}\otimes \I_2 \right)
\begin{bmatrix}
       c_k\\[0.3em]
	   s_k
\end{bmatrix}f_{k}
\end{split}
\end{equation*}
where $F_{i}={\left[F_{ix},\;F_{iy}\right]}^T$, $F_{j}={\left[F_{jx},\;F_{jy}\right]}^T$ represent the forces, with respect to a fixed reference frame, exerted by ECB$_{k}$ on agents $i$ and $j$, respectively.
Considering the incidence matrix $D$ in the same way done before, the efforts exerted on the agents by all the ECBs can be expressed as
\begin{equation}\label{eq:fuerzas_ECB}
F=-\mathcal{F}(r)^T f
\end{equation}
where $f=[f_1,\cdots, f_m]^T\in\mathbb{R}^m$ and $F=\left[F_1^T,\cdots,F_n^T\right]^T\in\mathbb{R}^{2n}$.

Thus, from equation (\ref{eq:pto_mat_din}), we have that the resulting momentum of the agents, with respect to a fixed reference frame, is
\begin{equation}\label{eq:pdotexpression}
\dot{p}=F + \nu
\end{equation}
where $\nu=[\nu_1^T,\cdots, \nu_n^T]^T\in\mathbb{R}^{2n}$ is a new external signal reserved to eventually add another control law.

Then, from equation (\ref{eq:dot_z}) and (\ref{eq:pdotexpression}), the interconnection that represents the interaction between the $m$ ECBs and the $n$ agents can be mathematically expressed in a pH form in terms of the shared velocities and efforts as
\begin{equation}\label{eq:sigma_I}
\Sigma_{I}:\begin{cases}
\begin{bmatrix}
       \dot{p}\\[0.3em]
	   \zeta\\[0.3em]
\end{bmatrix}=
\begin{bmatrix}
       \0_{2n}& -\mathcal{F}(r)^T\\[0.3em]
	   \mathcal{F}(r) & \0_m \\[0.3em]
\end{bmatrix}
\begin{bmatrix}
       \dot{r}\\[0.3em]
	   f \\[0.3em]
\end{bmatrix}+
\begin{bmatrix}
       \nu\\[0.3em]
	   \0_{m\times 1} \\[0.3em]
\end{bmatrix}
\end{cases}
\end{equation}
where $\zeta=[\zeta_1,\cdots, \zeta_m]^T\in\mathbb{R}^m$. 
Note that since $\dot{z}^*=0$, then $\zeta=\dot{z}$ and it relates to the velocities $\dot{r}$ through the \textit{Formation Matrix} as expressed in (\ref{eq:dot_z}).

Considering $\nu=0$ for the moment, the inputs and outputs of $\Sigma_{I}$ satisfy
\begin{equation}\label{eq:sigmaI_e}
\begin{split}
\zeta^Tf&=\dot{r}^T\mathcal{F}(r)^T f\\
&=-\dot{r}^T\dot{p}+\dot{r}^T\nu\Rightarrow \zeta^Tf+\dot{r}^T\dot{p}=0
\end{split}
\end{equation}
Thus, the interconnection subsystem is power-preserving.

Considering equation (\ref{eq:pdotexpression}) and that the Hamiltonian $H_p:\mathbb{R}^{2n}\rightarrow \mathbb{R}$ of the system (\ref{eq:pto_mat_din}) is its kinetic energy, i.e.,
\begin{equation}\label{eq:Hamiltoniano_MAS}
H_p(p)=\dfrac{1}{2}p^TM^{-1}p
\end{equation}	
the pHS model of the $n$ agents is
\begin{equation}\label{eq:sigma_P}
\Sigma_{MAS}:\begin{cases}
\begin{bmatrix}
       \dot{r}\\[0.3em]
	   \dot{p}
\end{bmatrix}&=
\begin{bmatrix}
       \0_{2n}& \I_{2n}\\[0.3em]
	   -\I_{2n} & \0_{2n} 
\end{bmatrix}
\nabla H_p
+
\begin{bmatrix}
       \0_{2n}\\[0.3em]
	   \I_{2n}
\end{bmatrix}\left( F + \nu \right)\\
\dot{r}&=\begin{bmatrix}
       \0_{2n} & \I_{2n}
\end{bmatrix}
\nabla H_p
\end{cases}
\end{equation}
Moreover, since
\begin{equation}\label{eq:Hp_e}
\dot{H}_p=\dot{r}^T\dot{p}
\end{equation}
system $\Sigma_{MAS}$ is passive (lossless) \cite{ortega2001putting}.

Finally, from the model in (\ref{eq:model_ECBk}), the dynamics of the set of ECBs can be expressed as
\begin{equation*}
\Sigma_{ECB}:\begin{cases}
\dot{\tilde{z}}&=\zeta\\
f&=\dfrac{\partial H_c}{\partial  \tilde{z}}+B\zeta
\end{cases}
\end{equation*}
where $\tilde{z}={\left[\tilde{z}_1,\cdots,\tilde{z}_m\right]}^T\in\mathbb{R}^{m}$, $B=\text{diag}\{[b_1,\cdots, b_m]\}$ and $H_c:\mathbb{R}^{m}\rightarrow \mathbb{R}$ the Hamiltonian with the expression
\begin{equation}\label{eq:Hamiltoniano_C}
\begin{split}
H_c(\tilde{z})=&H_s(\tilde{z})+\Phi_s\left(z^*\right)
\end{split}
\end{equation}
being
\begin{equation*}
\begin{split}
H_s(\tilde{z})=&\sum_{k=1}^{m}H_{sk}(\tilde{z}_k) \qquad \Phi_s\left(z^*\right)=\sum_{k=1}^{m}\Phi_{sk}\left(z_k^*\right)
\end{split}
\end{equation*}
the sum of the Hamiltonians defined in (\ref{eq:Hamitonian_ECB}).
Considering that $B\geq 0$, then
\begin{equation}\label{eq:Hc_e}
\begin{split}
\dot{H}_c&=\zeta^T \dfrac{\partial H_c}{\partial \tilde{z}}\\
&\leq \zeta^Tf\\
\end{split}
\end{equation}
and the system $\Sigma_{ECB}$ is passive.

Defining a new energy function $W$ as the sum of the Hamiltonians of the agents (\ref{eq:Hamiltoniano_MAS}) and the ECBs (\ref{eq:Hamiltoniano_C}),
\begin{equation}\label{eq:W_CbI}
\begin{split}
W(p,\tilde{z}):=&H_p(p)+H_s(\tilde{z})+\Phi_s\left(z^*\right)
\end{split}
\end{equation}
the complete dynamics of the IPS can also be expressed in a pH form in terms of $W$ as
\begin{equation}\label{eq:dynamics_CbI_1}
\begin{bmatrix}
       \dot{r}\\
       \dot{p}\\
	   \dot{\tilde{z}}
\end{bmatrix}=
\begin{bmatrix}
		\0  & \I & \0\\
       -\I &-\mathcal{F}(r)^TB\mathcal{F}(r)& -\mathcal{F}(r)^T\\
	    \0 & \mathcal{F}(r) & \0
\end{bmatrix}\nabla W
+
\begin{bmatrix}
	   \0 \\
       \nu\\
	   \0 \\
\end{bmatrix}
\end{equation}

\begin{remark}
The use of virtual springs can be seen, from an energy-related perspective, as a way to shape the energy (\ref{eq:W_CbI}) through the selection of $H_s$ and $\Phi_s$.
\end{remark}

\subsubsection{BG MAS-Model}

The BG model of the MAS is depicted in Fig. \ref{fig:BG_IF_2}, where three different parts can be distinguished:
(i) the collection of all the ECB models grouped under $\Sigma_{ECB}$,
(ii) the block $\Sigma_{MAS}$ containing $n$ storage elements \textbf{I} representing the agents modeled as point masses%
, and
(iii) the interconnection block $\Sigma_I$ representing the coupling structure.
The later one consist of a power-conserving transformation element \textbf{MTF} modulated by the transpose of the \textit{Formation Matrix}, that relates the efforts $f$ with $F$ and the flows $\dot{r}$ with  $\zeta$.

A major advantage of modeling with BG is that the expression of the control law can be straightforwardly obtained by reading the input-output dependence on the power bond at the interface/interconnection of the $\Sigma_{MAS}$ and the IPS (with $\dot{r}$ as input and $\dot{p}$ as output, as dictated by the causal assignment).

\begin{figure}[H]
\centering
\includegraphics[width=\columnwidth]{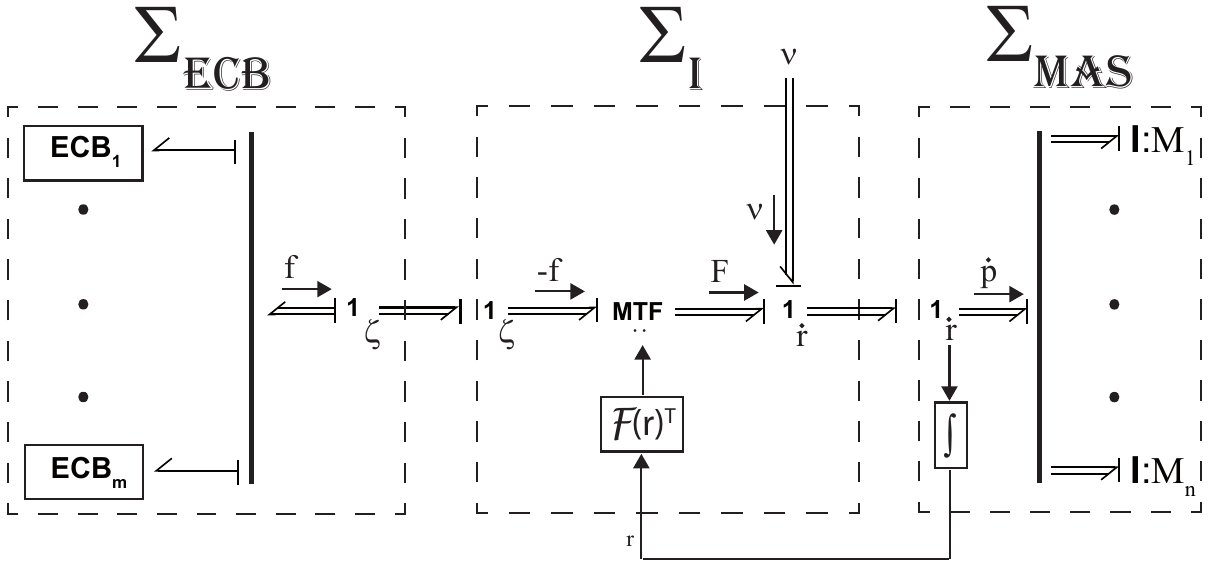}
\caption{\label{fig:BG_IF_2}BG scheme of the IPS.}
\end{figure}

\section{Formation Control}\label{sec:FC}

In this section we demonstrate that the BG model in Fig. \ref{fig:BG_IF_2}, can be transformed into a CbI scheme, with the agents and the ECBs modeled as pHS.
In doing so we push forward the assignment of physical attributes to the graph: as we associate to each edge an ECB, the overall interconnection structure of the MAS can be related to the \textit{Formation Matrix}.
Furthermore, we reveal that the CbI scheme naturally generates \textit{Casimir} functions and that all the mathematical properties associated with the proposed IPS emerge naturally.
This results then in a static feedback control scheme.
Finally, we extend the analysis to bases modeled as rigid bodies.

\subsection{Control by Interconnection}\label{sec:cbi}

Let's consider that the agents form the plant and the collection of ECBs constitutes the controller, then the BG variables are replaced with those associated with the CbI theory, according to the usual topology as presented in \cite{ortega2004interconnection}, for instance. The conversion is shown in Table \ref{tab:model_variables} and the corresponding CbI scheme is presented in Fig. \ref{fig:CbI}.

\begin{table}[H]
\begin{center}
\caption{Table of variables and their expressions.}
\label{tab:model_variables}
\begin{tabular}{| c | c | c | c | c |}
\hline\rule{0pt}{3ex}
BG & CbI & Type & Physical Quantity & Expression \\ \hline \rule{0pt}{3ex}
$\dot{p}$ & $u$ & Input & Force & $-\mathcal{F}(r)^T f+\nu$  \\ \hline \rule{0pt}{3ex}
$\dot{r}$ & $y$ & Output & Velocity & $M^{-1}p$  \\ \hline\rule{0pt}{3ex}
$\zeta$ & $u_c$ & Input & Velocity & $\mathcal{F}(r)\dot{r}$  \\ \hline\rule{0pt}{4ex}
$f$ & $y_c$ & Output & Force & $\dfrac{\partial H_c}{\partial\tilde{z}}+B\zeta$  \\ \hline
\end{tabular}
\end{center}
\end{table}

\begin{figure}[b!]
\centering
\includegraphics[width=\columnwidth]{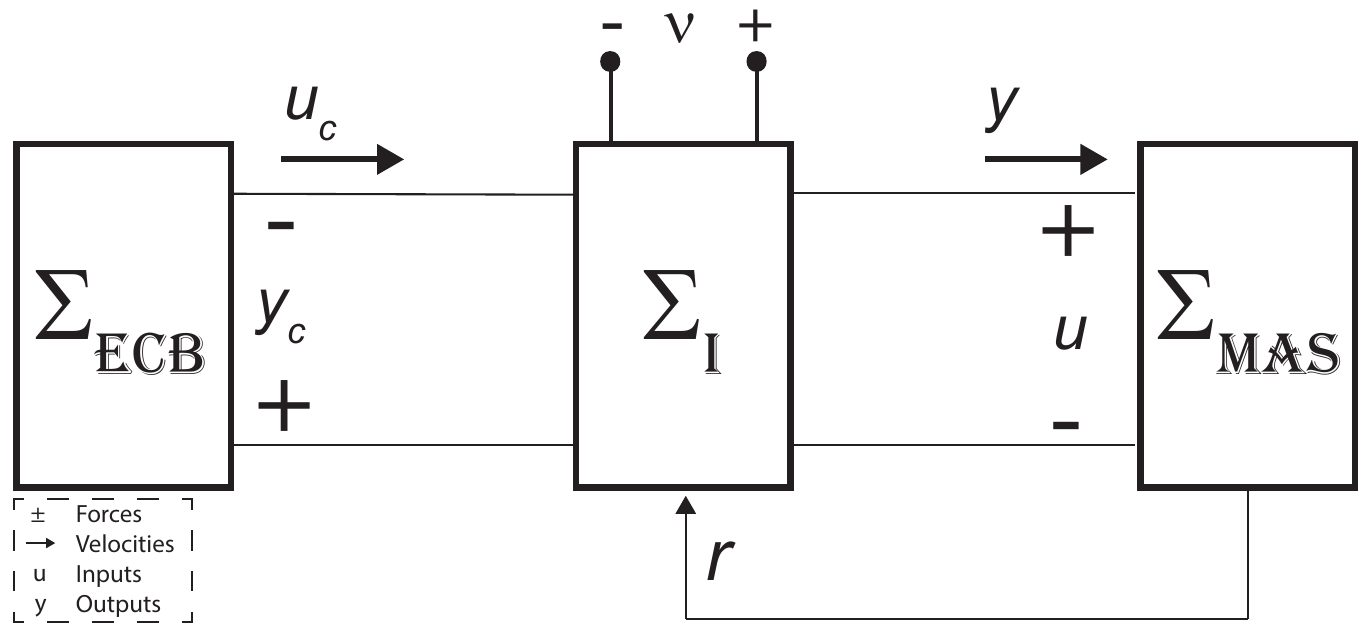}
\caption{\label{fig:CbI}Block diagram of the CbI scheme.}
\end{figure}

In the CbI method, we look for functions that are dynamically invariant and independent of the Hamiltonian. That is, to assign a desired form to the global energy function, it is necessary to relate the states of the plant $r$ and the controller $\tilde{z}$ by generating invariant sets defined as \textit{Casimir} functions \cite{ortega2008control}.

\begin{proposition}\label{prop:casimir}
The function 
\begin{equation*}
\mathcal{K}(r,\tilde{z})=\mathcal{F}(r)r-\tilde{z}
\end{equation*}
qualifies as a \textit{Casimir} function, relating the states of the plant $r$ and the controller $\tilde{z}$ through invariant sets.
\end{proposition}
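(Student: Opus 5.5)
The plan is to show that $\mathcal{K}$ is constant along every trajectory of the closed-loop system (\ref{eq:dynamics_CbI_1}), whatever the Hamiltonian $W$ is. Equivalently, the gradient of $\mathcal{K}$ should annihilate the interconnection-plus-damping matrix. This is the standard characterization of a Casimir in the CbI setting. We write $x=[r^T,\,p^T,\,\tilde{z}^T]^T$ and denote the structure matrix in (\ref{eq:dynamics_CbI_1}) by $J(x)-R(x)$, with
\[
R(x)=\text{diag}\{[\0,\ \mathcal{F}(r)^TB\mathcal{F}(r),\ \0]\}.
\]
The goal is the identity $\left(\partial \mathcal{K}/\partial x\right)^T\left[J(x)-R(x)\right]=\0$. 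Together with the condition $\left(\partial \mathcal{K}/\partial x\right)^T g=\0$ for the $\nu$-channel, this gives $\dot{\mathcal{K}}=0$ for all $W$ and all inputs.

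The first step is to compute the Jacobian of $\mathcal{K}$. By Lemma~\ref{prop:partial_Fq},
\[
\left(\partial(\mathcal{F}(r)r)/\partial r\right)^T=\mathcal{F}(r),
\]
so the row-block Jacobian of $\mathcal{K}$ with respect to $(r,p,\tilde{z})$ is $\left[\mathcal{F}(r)\ \ \0_{m\times 2n}\ \ -\I_m\right]$. This is exactly the Pfaffian structure $A(\psi)^T$ of Proposition~\ref{prop:holonomic}, with $z$ replaced by $\tilde{z}$; this is legitimate because $\dot{z}^*=0$.

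The second step is to multiply this row block by the block matrix in (\ref{eq:dynamics_CbI_1}), column block by column block:
\begin{itemize}
\item first column: $\mathcal{F}\cdot\0+\0\cdot(-\I)+(-\I)\cdot\0=\0$;
\item second column: $\mathcal{F}\cdot\I+\0\cdot(-\mathcal{F}^TB\mathcal{F})+(-\I)\cdot\mathcal{F}=\0$;
\item third column: $\mathcal{F}\cdot\0+\0\cdot(-\mathcal{F}^T)+(-\I)\cdot\0=\0$.
\end{itemize}
For the input, $\nu$ enters only the $p$-block, which $\mathcal{K}$ does not depend on, so that term vanishes as well. Hence
\[
\dot{\mathcal{K}}=\mathcal{F}(r)\dot r-\dot{\tilde z}=\mathcal{F}(r)M^{-1}p-\mathcal{F}(r)M^{-1}p=0,
\]
independently of $H_p$, $H_s$ and $B$. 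Consequently the level sets $\{\mathcal{K}(r,\tilde z)=\kappa\}$ are invariant. Choosing $\kappa=z^*$, i.e. initializing the controller state as $\tilde z(0)=z(0)-z^*$, gives the desired relation $\tilde z=\mathcal{F}(r)r-z^*$ between plant and controller states.

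The main obstacle is conceptual rather than computational: the structure matrix contains the non-constant term $-\mathcal{F}^TB\mathcal{F}$ and depends on $r$, so one must verify that dissipation does not spoil invariance. This is handled by the fact that $\mathcal{K}$ does not depend on $p$, which is exactly where the damping acts. I would also remark that the constant offset $z^*$ is what makes $\mathcal{F}(r)r-\tilde z$ equal $z^*$ rather than $0$ on the relevant level set.
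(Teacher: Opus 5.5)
Your proof is correct and takes essentially the same route as the paper. Both show that $\dot{\mathcal{K}}=\mathcal{F}(r)\dot r-\dot{\tilde z}=0$ along the closed-loop dynamics (\ref{eq:dynamics_CbI_1}) whatever the Hamiltonian, so the level sets $\Omega_c$ (with $c=z^*$) are invariant; your explicit check, via Lemma~\ref{prop:partial_Fq}, that $[\mathcal{F}(r)\ \ \0\ \ -\I_m]$ annihilates $J-R$ and the $\nu$-channel (the damping acting only on $p$) simply spells out the Hamiltonian-independence the paper asserts.
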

\begin{proof}
Let define the set
\begin{equation}\label{eq:conjunto_omega_c}
\Omega_c:=\{(r,\tilde{z})\in \mathbb{R}^{2n} \times \mathbb{R}^{m}|\tilde{z}=\mathcal{C}(r)-c\}\, ,\; c\in\mathbb{R}^{m}
\end{equation}
with $\mathcal{C}(r)=\mathcal{F}(r)r$ and $c=z^*$.
A necessary and sufficient condition for all sets $\Omega_c$, known as the foliation of the manifold $\Omega_C$, to be invariant%
\footnote{The set $\Omega_c \subset \mathbb{R}^{2n} \times \mathbb{R}^{m}$ is invariant if $\left(r(0),\tilde{z}(0)\right) \in \Omega_c \;\Rightarrow\; \left(r(t),\tilde{z}(t)\right)\in \Omega_c$ for all $t \,\geq \,0$.}
is that $\dot{\mathcal{K}}=0$.
Since the time derivative $\dot{\mathcal{K}}$ is zero along the closed-loop dynamics (\ref{eq:dynamics_CbI_1}) for all Hamiltonians $H_p$, that is
\begin{equation*}
\dot{\mathcal{C}}(r)-\dot{\tilde{z}}=0
\end{equation*}
This, and considering that the function $\mathcal{K}(r,\tilde{z})$ is independent of the energy functions $H_p$ and $H_c$ completes the proof.
\end{proof}

\begin{remark}
In this case, the constant $c$ of the set (\ref{eq:conjunto_omega_c}) has a physical meaning: choosing $c=z^*$ defines the length of the de-energized spring.
\end{remark}

\begin{remark}
Projecting the system's evolution onto $\Omega_c$, the dynamics (\ref{eq:dynamics_CbI_1}) can be expressed in a reduced form as
\begin{equation*}
\begin{bmatrix}
       \dot{r}\\[0.3em]
       \dot{p}\\
\end{bmatrix}=
\begin{bmatrix}
		\0_{2n}  & \I_{2n} \\[0.3em]
       -\I_{2n} &-\mathcal{F}(r)^TB\mathcal{F}(r)\\
\end{bmatrix}\nabla W_c
+
\begin{bmatrix}
	   \0_{2n\times 1} \\
       \nu
\end{bmatrix}
\end{equation*}
where the energy (\ref{eq:W_CbI}) has been replaced by
\begin{equation*}
W_c(r,p):=H_p(p)+H_s(\mathcal{F}(r)r-z^*)+\Phi_s\left(z^*\right)
\end{equation*}
\end{remark}

\begin{remark}\label{rem:initial_cond}
As the level sets of the \textit{Casimir} function are invariant, to ensure that the trajectory begins and remains in $\Omega_c$, they must satisfy
\begin{equation*}
\mathcal{F}\left(r(0)\right)r(0)-\tilde{z}(0)=\mathcal{F}(r(t))r(t)-\tilde{z}(t)
\end{equation*}
for all $t \,\geq \,0$. Since $\mathcal{F}\left(r(t)\right) r(t)=z(t)$, the initial conditions of the controller's states $z$ must satisfy
\begin{equation*}
\begin{split}
z(0)&=\mathcal{F}(r(0))r(0)\\
\end{split}
\end{equation*}
for any vector $r(0)$. In other words, it is not possible to assign arbitrary values to the initial values $z(0)$ given the initial positions $r(0)$. Beyond that, imposing a restriction on the initial condition of the controller is rather unnatural, so we will address this issue in the next section.
\end{remark}

\begin{remark}
In this work the deduction of the function $\mathcal{C}(r)=\mathcal{F}(r)r$ of the set (\ref{eq:conjunto_omega_c}) arises naturally from the approach used. This is a significant advantage, as in general, obtaining such functions involves 
solving partial differential equations (PDEs) of the form
\begin{equation*}
\begin{bmatrix}
	   (\nabla \mathcal{C})^T & -\I_m
\end{bmatrix}
\begin{bmatrix}
       \dot{r}\\
	   \dot{\tilde{z}}
\end{bmatrix}
=\0_{m\times 1}
\end{equation*}
where in this case, as pointed in Lemma \ref{prop:partial_Fq} the solution holds
\begin{equation*}
\begin{split}
(\nabla \mathcal{C})^T&=\left(\dfrac{\partial \left( \mathcal{F}(r)r\right)}{\partial  r}\right)^T\\
&=\mathcal{F}(r)
\end{split}
\end{equation*}
\end{remark}

\subsection{State Feedback Formation Control Law}

Proposition \ref{prop:holonomic} demonstrates that $z$ can be obtained directly from the \textit{Formation Matrix}.
As a result, there is no need to integrate the velocities $\zeta$, and the disadvantage related to the initial condition, highlighted in Remark \ref{rem:initial_cond}, is avoided.
This implies that the proposed IPS in Fig. \ref{fig:IF} eliminates, within a feedback scheme representation, the dynamic behavior associated with the virtual springs. 

Thus the CbI approach provides a static state-feedback solution to the original formation control problem,  with  the control law
\begin{equation}\label{eq:feedback_control_law}
u=-\mathcal{F}(r)^T \left( f_r(\tilde{z}; z^*)+B\mathcal{F}(r)\dot{r} \right) +\nu\\ 
\end{equation}
and the corresponding control block diagram of Fig. \ref{fig:DB_control_rigid}.

\begin{figure}[b!]
\centering
\includegraphics[width=\columnwidth]{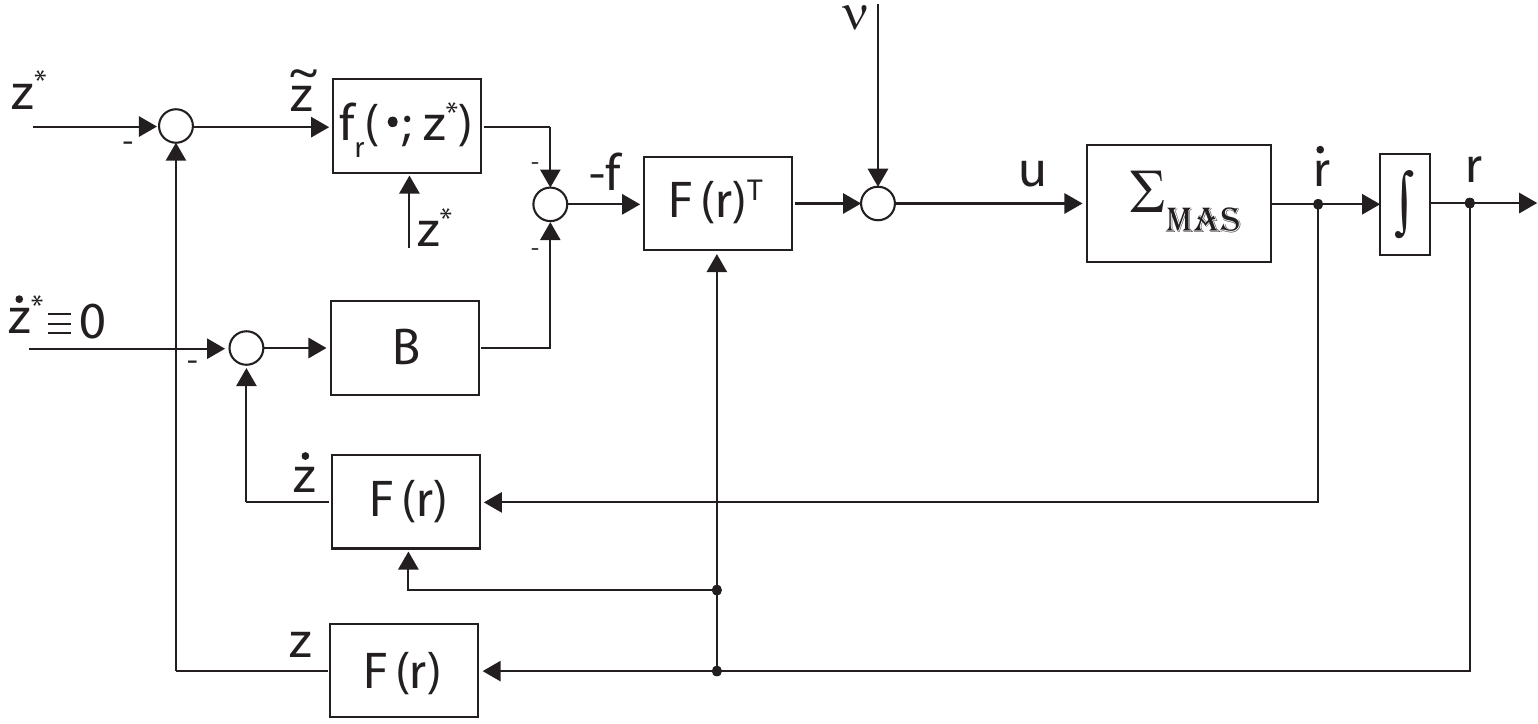}
\caption{\label{fig:DB_control_rigid} Multi-agent control scheme diagram.}
\end{figure}

\subsection{Extension to Rigid Body Robots Models}

In this section, we assume that the agents are omnidirectional mobile bases modeled as in (\ref{eq:MB_dynamic_equation}).
We split the control input into two components:  one to control the positions $r$ of the centers of mass and the other to control the angles $\varphi$. 

In that sense, we introduce the transformation matrices $T_{r_i}$ and $T_{\varphi_i}$ to split the coordinates $q_i$ into $r_i$ and $\varphi_i$, respectively, as
\begin{equation*}
\begin{bmatrix}
       r_i \\[0.3em]
	  \varphi_i
\end{bmatrix}=
\begin{bmatrix}
       T_{r_i} \\[0.3em]
	   T_{{\varphi}_i}
\end{bmatrix}
q_i; \quad
T_{r_i}=\begin{bmatrix}
       1 & 0 & 0 \\[0.3em]
	   0 & 1 & 0 \\[0.3em]
\end{bmatrix}\quad
T_{{\varphi}_i}=\begin{bmatrix}
       0 & 0 & 1 \\
\end{bmatrix}
\end{equation*}
for $i=1,\cdots,n$.
Considering vector $q$, the transformation can be expressed in a compact form as:
\begin{equation*}
\begin{bmatrix}
       r \\[0.3em]
	  \varphi
\end{bmatrix}=
\begin{bmatrix}
       T_r \\[0.3em]
	   T_{\varphi}
\end{bmatrix}
q
\end{equation*}
where
$\varphi=\left[\varphi_1,\cdots,\varphi_n\right]^T\in\mathbb{R}^{n}$,
and
\begin{equation*}
T_r=\begin{bmatrix}
       T_{r_1} & \0_{2\times 3} & \cdots & \0_{2\times 3}\\
       \0_{2\times 3} & \ddots &  &\vdots \\
	   \vdots &  & \ddots & \0_{2\times 3} \\
	   \0_{2\times 3} & \cdots & \0_{2\times 3} & T_{r_n} \\
\end{bmatrix}
\end{equation*}

\begin{equation*}
T_{\varphi}=\begin{bmatrix}
       T_{\varphi_1} & \0_{1\times 3} & \cdots & \0_{1\times 3}\\
       \0_{1\times 3} & \ddots &  &\vdots \\
	   \vdots &  & \ddots & \0_{1\times 3} \\
	   \0_{1\times 3} & \cdots & \0_{1\times 3} & T_{\varphi_n} \\
\end{bmatrix}
\end{equation*}

As the mobile bases are omnidirectional, the components of the coordinates $q_i$ are independent of each other.
Consequently, the base translation can be controlled with the same law derived for the point mass model, while the rotation of the base can be controlled by a separate law, as shown next.

Let's choose in (\ref{eq:MB_dynamic_equation}) the inverse dynamic control law $\tau = u_M$
\begin{equation*}
u_M=G(q)^{-1}\left(u_r+u_{\varphi}+C(q,\dot{q})\dot{q}\right)
\end{equation*}
with\footnote{For the sake of simplicity in the specification of the control law, we consider all the bases to be identical.}
\begin{equation*}
\begin{split}
u_r&=T_r^T\left(-\mathcal{F}(r)^T f+\nu\right)\\
u_{\varphi}&=T_{\varphi}^T\left(\frac{3J_{r}L^2}{2R^2}+I_{b}\right)\left(\ddot{\varphi^*}+K_{D_{\varphi}}\dot{\tilde{\varphi}}+K_{P_{\varphi}}\tilde{\varphi}\right)\end{split}
\end{equation*}
where $K_{D_{\varphi}}$ and $K_{P_{\varphi}}$ are positive definite (diagonal) matrices,
$\varphi^*$ and $\tilde{\varphi}=\varphi^*-\varphi$ are the angle position references and errors, respectively.
Then, we obtain two separate dynamics, one for $r$ and another for $\varphi$ as:
\begin{equation*}
\begin{split}
&\left(\frac{3J_{r}}{2R^2}+m_{b}\right)\ddot{r}=-\mathcal{F}(r)^T f+\nu\\
&\ddot{\tilde{\varphi}}+K_D\dot{\tilde{\varphi}}+K_P\tilde{\varphi}=0
\end{split}
\end{equation*}
Both control laws are directly translated into torque commands for the wheel motors through the inverse of the matrix $G(q)$, with the latter being managed by low-level motor controllers.

\begin{remark}
The dynamics for $r$ are the same as those for the point mass model in equation (\ref{eq:pdotexpression}). 
Thus, the control law design and stability analysis for the omnidirectional bases can be performed using the point mass model.
\end{remark}

\section{Leader Agent Control}\label{sec:LAC}
In this section we enhance the formation control law with a leader-follower strategy to place the interconnected MAS at a specific position and orientation in the plane. Finally, we demonstrate convergence to the equilibrium point through the direct method of Lyapunov.

To this end, we select any two neighboring nodes, define one of them as the leader, and apply an additional control law to regulate the position and orientation of the MAS.

Let's consider the leader node and one of the edges that connects it to one of its neighboring agents. 
As the IPS of Fig. \ref{fig:IF} behaves like a semi-rigid body, by fixing the position of the leader and the orientation of the edge, the three degrees of freedom for the entire MAS in the plane are determined.

Given the velocities $\dot{x}_i$, $\dot{y}_i$ and $\dot{x}_j$, $\dot{y}_j$ of  two neighboring agents, as shown in Fig. \ref{fig:Rotation}, the velocities perpendicular to the edge $k$ that connects both agents can be expressed as
\begin{equation}\label{eq:ve_tan}
\begin{split}
v_i&=-\dot{x}_is_k+\dot{y}_ic_k \quad v_j=\dot{x}_js_k-\dot{y}_jc_k
\end{split}
\end{equation}
Considering that the rotation velocity $\omega_k=\dot{\alpha}_k$ can be expressed as
\begin{equation*}
\omega_k=\dfrac{v_i+v_j}{z_k}
\end{equation*}
replacing by (\ref{eq:ve_tan}) it can be rewritten as 
\begin{equation*}
\omega_k=-\dfrac{s_k}{z_k}\dot{x}_i+\dfrac{c_k}{z_k}\dot{y}_i+\dfrac{s_k}{z_k}\dot{x}_j-\dfrac{c_k}{z_k}\dot{y}_j
\end{equation*}

Consider now that agent $i$ is the leader. Let us define the vector $a:=\left[x_i,\;y_i,\;\alpha_k\right]^T$, which represents its position and the orientation angle of edge $k$. 
Then, the time derivative of the vector $a$ can be expressed as:
\begin{equation}\label{eq:dot_a}
\dot{a}=\mathcal{A}(r)\dot{r}
\end{equation}
where $\mathcal{A}(r)$, of dimension $({3\times 2n})$, is defined as the \textit{Leader} matrix, and its expression is
\begin{equation*}
\begin{split}
\mathcal{A}(r)=&
\begin{bmatrix}
	   0 & \hdots & 1 & 0 & \hdots & 0 & 0 & \hdots & 0\\[0.3em]
	   0 & \hdots & 0 & 1 & \hdots & 0 & 0 & \hdots & 0\\[0.3em]
	   0 & \hdots &-\dfrac{s_k}{z_k} & \dfrac{c_k}{z_k} & \hdots & \dfrac{s_k}{z_k} & -\dfrac{c_k}{z_k} & \hdots & 0\\[0.3em]
\end{bmatrix}\\
&_\text{Column index} \quad \uparrow_{2i-1} \; \uparrow_{2i} \qquad \quad \uparrow_{2j-1} \;\;\uparrow_{2j} \\[0.3em]
\end{split}
\end{equation*}

Thus, accounting from (\ref{eq:dot_z}) and (\ref{eq:dot_a}) the derivatives $\dot{z}$ and $\dot{a}$ can be linked to the velocities of the agents $\dot{r}$ through the transformation
\begin{equation*}
\begin{bmatrix}
       \dot{z}\\
	   \dot{a}
\end{bmatrix}=
\mathcal{T}(r)\dot{r}
\end{equation*}
where $\mathcal{T}(r):=\left[\mathcal{F}(r)^T,\;\mathcal{A}(r)^T\right]^T$ is a ($2n\times 2n$) square matrix.

\begin{proposition}\label{pro:RankT}
If the MAS is characterized by a minimally rigid graph $\mathcal{G}$, then the matrix $\mathcal{T}(r)$ has full rank equal to $2n$.
\end{proposition}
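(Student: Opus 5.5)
The plan is to show that $\mathcal{T}(r)$ has trivial null space: if $\dot{r}$ satisfies $\mathcal{F}(r)\dot{r}=\0$ and $\mathcal{A}(r)\dot{r}=\0$, then $\dot{r}=\0$. Rank alone is not enough, since Proposition \ref{pro:RankF} gives $2n-3$ independent rows for $\mathcal{F}(r)$ (non-degenerate configuration) and $\mathcal{A}(r)$ adds three rows. We also need the row space of $\mathcal{A}(r)$ to meet the row space of $\mathcal{F}(r)$ only at zero. Equivalently, $\mathcal{A}(r)$ must restrict injectively to $\mathcal{N}(\mathcal{F})$. Since $\dim\mathcal{N}(\mathcal{F})=2n-(2n-3)=3$, it suffices to describe this null space explicitly and check that $\mathcal{A}(r)$ maps it isomorphically onto $\mathbb{R}^3$.

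First, I identify $\mathcal{N}(\mathcal{F})$ with the infinitesimal rigid motions of the plane. Translations $\dot{r}=\mathbf{1}_n\otimes[1,0]^T$ and $\dot{r}=\mathbf{1}_n\otimes[0,1]^T$ give $\dot{\Delta} r=(D^T\otimes\I_2)\dot{r}=\0$, because each column of $D$ sums to zero. Hence they lie in the null space. The infinitesimal rotation about the origin, $\dot{r}_l=S r_l$ with $S=\begin{bmatrix}0&-1\\1&0\end{bmatrix}$, gives $\dot{\Delta} r_k=S\Delta r_k$. Therefore $\dot{z}_k=[c_k,\;s_k]S\Delta r_k=\Delta r_k^TS\Delta r_k/z_k=0$, because $S$ is skew-symmetric.

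These three vectors are linearly independent whenever at least two nodes are distinct, which a non-degenerate configuration with $z_k>0$ guarantees. By the dimension count they therefore span $\mathcal{N}(\mathcal{F})$. This step is where minimal rigidity enters, through Proposition \ref{pro:RankF}.

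Second, I evaluate $\mathcal{A}(r)$ on this basis. The first two rows of $\mathcal{A}(r)$ extract $(\dot{x}_i,\dot{y}_i)$. The $x$-translation gives $(1,0,\omega)$, and the $y$-translation gives $(0,1,\omega)$. For a translation $\dot{r}_i=\dot{r}_j$, so the third row gives $\omega_k=0$, since its entries for $j$ are the negatives of those for $i$.

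For the rotation, the third row yields
\begin{equation*}
\omega_k=\frac{1}{z_k}\left[-s_k,\;c_k\right]S(r_i-r_j)=\frac{1}{z_k}\left[-s_k,\;c_k\right]S\Delta r_k .
\end{equation*}
Here $S\Delta r_k=z_k[-s_k,\;c_k]^T$, which gives $\omega_k=1$. Its first two components are $Sr_i$, which may be nonzero. The image of the basis is therefore the $3\times3$ matrix with columns $(1,0,0)$, $(0,1,0)$ and $(\ast,\ast,1)$. This matrix is upper triangular with unit diagonal, hence nonsingular. Consequently $\mathcal{A}(r)\dot{r}=\0$ together with $\dot{r}\in\mathcal{N}(\mathcal{F})$ forces $\dot{r}=\0$, and $\mathcal{T}(r)$ has rank $2n$.

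The main obstacle is the first step: identifying $\mathcal{N}(\mathcal{F})$ as exactly the three rigid motions. This leans on Proposition \ref{pro:RankF}, whose hypotheses are full rank at a non-degenerate (generic) configuration and $z_k\neq0$. I would state these explicitly as standing assumptions, because at degenerate configurations the rank of $\mathcal{F}(r)$ drops and the conclusion fails. I would also verify the sign conventions for $\alpha_k$ relative to $\Delta r_k=r_i-r_j$. The only thing that matters there is that the third diagonal entry is nonzero, so an overall sign of $\pm1$ does not affect the argument.
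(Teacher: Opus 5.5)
Your proof is correct. It shares the paper's skeleton: take $\mathrm{rank}\,\mathcal{F}(r)=2n-3$ from Proposition~\ref{pro:RankF}, then show that the three rows of $\mathcal{A}(r)$ complete it. The decisive step, however, is handled differently.

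The paper simply asserts that the rows of $\mathcal{A}(r)$ are independent of those of $\mathcal{F}(r)$. Its other observation, that $\mathcal{A}(r)$ alone has rank $3$, is not sufficient by itself, as you rightly note. You prove the needed transversality by the dual argument. The dimension count shows that $\mathcal{N}(\mathcal{F})$ is spanned by the two translations and the infinitesimal rotation. You then check that $\mathcal{A}(r)$ sends this basis to a unit upper-triangular $3\times 3$ matrix. All of your computations check out: translations are killed because the columns of $D$ sum to zero, the rotation is killed by skew-symmetry of $S$, and the rotation gives $\omega_k=\pm1$.

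The paper's version buys brevity and an intuitive reading: $\mathcal{A}(r)$ pins down exactly the three rigid-motion degrees of freedom that distances cannot see. Yours actually justifies that intuition. It also does the following:
\begin{itemize}
\item It makes the standing assumptions explicit: $z_k>0$, and a configuration at which $\mathcal{F}(r)$ attains rank $2n-3$.
\item It is insensitive to the paper's somewhat inconsistent sign conventions for $D$ versus $\Delta r_k=r_i-r_j$.
\item It yields a sharper statement. The restricted map has determinant $\pm1$ for every $r$, and rank deficiency of $\mathcal{F}(r)$ forces $\mathrm{rank}\,\mathcal{T}(r)<2n$. Hence $\mathcal{T}(r)$ is invertible exactly when $\mathcal{F}(r)$ has full rank.
\item It shows that the pair used in the third row of $\mathcal{A}(r)$ need only consist of two distinct positions.
\end{itemize}
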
 
\begin{proof}
First, according to Proposition \ref{pro:RankF} the \textit{Formation Matrix} $\mathcal{F}(r)$ has full rank equal to $m$. 
Second, the \textit{Leader} matrix $\mathcal{A}(r)$ has full rank equal to $3$, as all its rows are linearly independent.
Furthermore, since $m=2n-3$ due to the minimally rigid graph $\mathcal{G}$ and the 3 rows of $\mathcal{A}(r)$ are independent of the rows of the matrix $\mathcal{F}(r)$, the matrix $\mathcal{T}(r)$ has full rank equal to $m+3=2n$.
\end{proof}

\begin{proposition}
Consider the dynamics model of MAS (\ref{eq:dynamics_CbI_1}) characterized by a minimally rigid graph $\mathcal{G}$, in closed loop with the linear proportional-integral (PI) input
\begin{equation}\label{eq:v_leaderagent}
\nu=\mathcal{A}(r)^T\left(K_{Pa}\,\dot{\tilde{a}}+K_{Ia}\,\tilde{a}\right) 
\end{equation}
where $K_{Pa}$ and $K_{Ia}$ are positive definite matrices of dimension $({3\times 3})$ and $\tilde{a}=a^*-a$ is the position error. Then:
\begin{itemize}
\item[(i)]
The closed-loop dynamics can be written in pH form as follows:
\begin{equation}\label{eq:closed_loop_dynamics_v}
\begin{bmatrix}
       \dot{\tilde{a}}\\
       \dot{p}\\
	   \dot{\tilde{z}}
\end{bmatrix}=
\begin{bmatrix}
		\0_3  & -\mathcal{A} & \0_{3\times m} \\
       \mathcal{A} &-\mathcal{T}(r)^T\gamma\mathcal{T}(r)& -\mathcal{F}(r)^T\\
	    \0_{m\times 3} & \mathcal{F}(r) & \0_m
\end{bmatrix}\nabla V
\end{equation}
with $\mathcal{\gamma}=\text{diag}\{[B,\,K_{Pa}]\}$ a positive definite matrix of dimension $(2n\times 2n)$ and
\begin{equation}\label{eq:lyap_leader}
V(p,\tilde{z},\tilde{a})=W(p,\tilde{z})+\dfrac{1}{2}\tilde{a}^TK_{Ia}\,\tilde{a} >0, \;\forall \; p,\tilde{z} ,\tilde{a} \neq 0
\end{equation}

\item[(ii)]
The equilibrium  point $(p_*=0,\tilde{z}_*=0,\tilde{a}_*=0)$ is stable.

\item[(iii)]
If the output
\begin{equation*}
y_D=\mathcal{T}(r)M^{-1}p
\end{equation*}
is a detectable output of the dynamics (\ref{eq:closed_loop_dynamics_v}), then $(p_*=0,\tilde{z}_*=0,\tilde{a}_*=0)$ is asymptotically stable.
\end{itemize}

\end{proposition}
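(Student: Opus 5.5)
Part (i) is a direct computation. First I compute $\dot{\tilde a}=-\dot a=-\mathcal{A}(r)\dot r=-\mathcal{A}(r)M^{-1}p$, using $\dot a^*=0$ (constant reference) and (\ref{eq:dot_a}). Since $\nabla_p V=M^{-1}p$, $\nabla_{\tilde z}V=\nabla H_s$ and $\nabla_{\tilde a}V=K_{Ia}\tilde a$, this gives the first block row of (\ref{eq:closed_loop_dynamics_v}).

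Next I substitute (\ref{eq:v_leaderagent}) into the $\dot p$ row of (\ref{eq:dynamics_CbI_1}):
\begin{equation*}
\dot p=-\nabla_r W-\mathcal{F}^TB\mathcal{F}M^{-1}p-\mathcal{F}^T\nabla H_s-\mathcal{A}^TK_{Pa}\mathcal{A}M^{-1}p+\mathcal{A}^TK_{Ia}\tilde a .
\end{equation*}
Here $\nabla_r W=0$, because $W$ depends only on $(p,\tilde z)$. The two damping terms combine as $\mathcal{T}^T\gamma\mathcal{T}M^{-1}p$, by the block structure of $\mathcal{T}$ and $\gamma$. The term $\mathcal{A}^TK_{Ia}\tilde a$ is the $(2,1)$ block acting on $\nabla_{\tilde a}V$. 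The $\dot{\tilde z}$ row is unchanged.

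To finish (i), I check that the interconnection part is skew-symmetric and the dissipation part $\text{diag}\{\0,\mathcal{T}^T\gamma\mathcal{T},\0\}$ is positive semidefinite. Positivity of $V$ follows from $H_p>0$ for $p\neq0$, from $K_{Ia}>0$, and from the choice of $\Phi_s$, which makes $H_c$ have a strict minimum, taken to be zero, at $\tilde z=0$. I would make that normalization explicit.

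For (ii), I use $V$ as a Lyapunov function. Skew-symmetry gives
\begin{equation*}
\dot V=-p^TM^{-1}\mathcal{T}^T\gamma\mathcal{T}M^{-1}p=-y_D^T\gamma y_D\le0 .
\end{equation*}
Since $V$ is positive definite about $(0,0,0)$, with a strict local minimum there, Lyapunov's direct method gives stability.

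For (iii), I apply LaSalle's invariance principle on a compact sublevel set of $V$. On the set where $\dot V=0$, positive definiteness of $\gamma$ gives $y_D=0$. By Proposition \ref{pro:RankT}, $\mathcal{T}(r)$ is invertible, so $M^{-1}p=0$, hence $p\equiv0$ and $\dot p\equiv0$. The $\dot p$ equation then reduces to $\mathcal{T}(r)^T[\nabla H_s^T,\;-(K_{Ia}\tilde a)^T]^T=0$, and invertibility of $\mathcal{T}$ again forces $\nabla H_s(\tilde z)=0$ and $\tilde a=0$. Since $H_s$ has its unique critical point at $\tilde z=0$ on the admissible domain (the barrier keeps $z>0$), this gives $\tilde z=0$. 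This is exactly where the detectability hypothesis enters: $y_D\equiv0$ implies convergence to the equilibrium.

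The main obstacle is compactness and well-posedness. The state $r$ does not appear in $V$, and $\tilde z=\mathcal{F}(r)r-z^*$ involves a barrier, so sublevel sets must be argued bounded in $(p,\tilde z,\tilde a)$. Non-degeneracy of the configuration is also needed for Proposition \ref{pro:RankT} to apply along trajectories. I would handle this by restricting to the invariant Casimir leaf $\Omega_c$ (Proposition \ref{prop:casimir}), noting that bounded $\tilde a$ and bounded edge lengths confine $r$. I would state non-degeneracy explicitly as an assumption, which is where the stated detectability condition is really used.
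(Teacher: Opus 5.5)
Your parts (i) and (ii) follow the paper. You substitute the PI input with $a^*$ constant, merge the two damping terms into $\mathcal{T}(r)^T\gamma\mathcal{T}(r)$, and differentiate $V$ to get $\dot V=-y_D^T\gamma y_D\le 0$. You tacitly put $\mathcal{A}^T$ in the $(2,1)$ block. Dimensions and skew-symmetry require that, so say it explicitly, because the printed matrix has $\mathcal{A}$.

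For (iii) you take a genuinely different route. The paper never examines the set where $y_D\equiv 0$. It takes detectability of $y_D$ as a hypothesis and invokes the standard port-Hamiltonian result that $\dot V\le -y_D^T\gamma y_D$ plus detectability gives asymptotic stability.

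You instead run LaSalle and use invertibility of $\mathcal{T}(r)$ twice. First it gives $p\equiv 0$ from $y_D\equiv 0$. Then it splits $\mathcal{T}(r)^T[\nabla H_s^T,\,-(K_{Ia}\tilde a)^T]^T=0$ into $\nabla H_s=0$ and $\tilde a=0$. You finish with the unique critical point of $H_s$. That argument proves local zero-state observability of $y_D$ near a non-degenerate equilibrium. So under the non-degeneracy that Proposition~\ref{pro:RankT} needs, the detectability hypothesis of (iii) follows rather than being an extra assumption. Your remarks that this is ``where the detectability hypothesis enters'' undersell it: you never use the hypothesis, you derive it.

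What your route costs is the bookkeeping that the paper's citation hides:
\begin{itemize}
\item compactness of small sublevel sets around the strict minimum of $V$;
\item non-degeneracy of the configuration on those sets;
\item the fact that $(\tilde z,\tilde a)$ are only local coordinates for $r$, since $\tilde z=0$, $\tilde a=0$ may correspond to several mirrored realizations (cf. Remark~\ref{re:desventajas_lider}).
\end{itemize}
So you obtain local asymptotic stability of a given non-degenerate realization, which is the real content of (iii). The paper's version is shorter but leaves its hypothesis unverified; yours discharges it.
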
 
\begin{proof}
First, to prove (i), we replace (\ref{eq:v_leaderagent}) in the dynamics (\ref{eq:dynamics_CbI_1}), and consider that $a^*$ is a constant reference vector.  After some simple calculations we get (\ref{eq:closed_loop_dynamics_v}).
To prove (ii), we consider (\ref{eq:lyap_leader}) as a Lyapunov candidate function for the closed loop system. Its time derivative is as follows:
\begin{equation*}
\begin{split}
\dot{V}&=\dot{r}^T\dot{p}+\zeta^Tf -\zeta^TB\zeta+\tilde{a}^TK_{Ia}\dot{\tilde{a}}\\
&=\dot{r}^T\nu -\dot{r}^T\mathcal{F}(r)^TB\mathcal{F}(r)\dot{r}-\tilde{a}^TK_{Ia}\dot{a}\\
&=-\dot{r}^T\mathcal{F}(r)^TB\mathcal{F}(r)\dot{r}-\dot{r}^T\mathcal{A}(r)^TK_{Pa}\mathcal{A}(r) \dot{r}\\
&=-\dot{r}^T\mathcal{T}(r)^T\gamma\mathcal{T}(r)\dot{r}\leq 0,
\end{split}
\end{equation*}
from which, by Proposition \ref{pro:RankT} and considering that $\dot{r}=M^{-1}p$, we conclude Lyapunov stability of the equilibrium point.
Finally, to prove (iii) notice that $y_D^*=0$ iff $y_D=0$ therefore the standard detectability assumption can be imposed to $y_D$, ensuring asymptotic stability of the equilibrium point \cite{van2016l2}.
\end{proof}

\section{Position Control}\label{sec:poscontrol}
In this section, we keep the same physical approach with ECBs presented to solve the collision avoidance problem and configure the external control signal $\nu$ from equation (\ref{eq:pdotexpression}) to implement a position-based formation control. 
Additionally, we introduce the third extra control signal $\omega$ to ensure the asymptotic stability of the equilibrium point.

We start by defining a Virtual Reference Structure (VRS), composed of a collection of $n$ Virtual Reference Points (VRP) that move in the plane along a predetermined trajectory.
In turn, each agent is defined as a Real Equivalent Agent (REA).
Then, the objective of formation control is to associate each REA with its corresponding VRP, so that the former take on the topology of the VRS.

Let's consider that the VRS  has a reference frame $O_S-X_SY_SZ_S$ fixed to the structure. 
Let $r_S=\left[x_S,\;\; y_S\right]^T$
be the vector describing the position of frame $S$ in the plane,
$r_i^*=\left[x_{i}^*,\;\; y_{i}^*\right]^T$ be the vector of coordinates of the $i$-th VRP with respect to a fixed reference frame, and
$r_i^S$ be the vector of coordinates of the $i$-th VRP expressed in frame $S$, the relationship between them can be written as:
\begin{equation*}
r_i^*=r_S+R_S \;r_i^S
\end{equation*}
where
\begin{equation*}
R_S=\begin{bmatrix}
       c_{\varphi_S} & -s_{\varphi_S}\\[0.3em]
       s_{\varphi_S} & c_{\varphi_S}
     \end{bmatrix}\; 
\end{equation*}
where ${\varphi_S}$ is the rotation angle of the frame.

\begin{proposition}
Consider the MAS dynamics (\ref{eq:dynamics_CbI_1}) in closed loop with
\begin{equation}\label{eq:v_positioncontrol}
\nu=M\ddot{r}^*+K_{Ds}\dot{\tilde{r}}+K_{Ps}\tilde{r} +\omega
\end{equation}
where $r^*=[r_1^{*T},\cdots, r_n^{*T}]^T\in\mathbb{R}^{2n}$, $\tilde{r}=r^*-r$ is the position error, $K_{Ps}$ and $K_{Ds}$ are positive definite diagonal matrices of dimension ($2n\times 2n$) and $\omega \in \mathbb{R}^{2n}$ is a new control signal. Then:
\begin{itemize}
\item[(i)]
The closed-loop dynamics can be written in pH form as follows:
\begin{equation}\label{eq:closed_loop_dynamics_v_w}
\begin{bmatrix}
       \dot{\tilde{r}}\\
       \ddot{\tilde{r}}\\
	   \dot{\tilde{z}}
\end{bmatrix}=
J\nabla V_p
+
\begin{bmatrix}
	   \0_{1\times 2n} \\
       M^{-1}\omega\\
	   \0_{1\times m} \\
\end{bmatrix}
\end{equation}
with
\begin{equation*}
J=
\begin{bmatrix}
		\0_{2n}  & M^{-1} & \0_{2n\times m}\\
       -M^{-1} &-M^{-1}\Pi M^{-1}& M^{-1}\mathcal{F}(r)^T\\
	    \0_{m\times 2n} & \mathcal{F}(r)M^{-1} & \0_{m}
\end{bmatrix}
\end{equation*}
being $\Pi=K_{Ds}+\mathcal{F}(r)^TB\mathcal{F}(r)$ a positive definite matrix of dimension $(2n\times 2n)$ and
\begin{multline}\label{eq:Lyapunov_function_2}
V_p(\tilde{r},\dot{\tilde{r}},\tilde{z})=\dfrac{1}{2}\dot{\tilde{r}}^TM\dot{\tilde{r}}+\dfrac{1}{2}\tilde{r}^TK_{Ps}\tilde{r}+H_c(\tilde{z})>0 \\ 
\quad \forall \; \tilde{r},\dot{\tilde{r}},\tilde{z} \neq 0
\end{multline}

\item[(ii)]
With $\omega=0$, the equilibrium  point $(\tilde{r}_*=0,\dot{\tilde{r}}_*=0,\tilde{z}_*=0)$ is stable.
\end{itemize}

\end{proposition}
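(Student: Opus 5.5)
The plan mirrors the proof of the leader-agent proposition: write the closed loop in error coordinates, check that the stated structure reproduces it, then take $V_p$ as a Lyapunov function.

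\textbf{Part (i).} Substitute (\ref{eq:v_positioncontrol}) into the momentum equation of (\ref{eq:dynamics_CbI_1}). Since $p=M\dot r$, this gives
\begin{equation*}
M\ddot r=-\mathcal{F}(r)^T\left(\nabla H_c+B\mathcal{F}(r)\dot r\right)+M\ddot r^*+K_{Ds}\dot{\tilde r}+K_{Ps}\tilde r+\omega .
\end{equation*}
Using $\tilde r=r^*-r$, this rewrites as
\begin{equation*}
M\ddot{\tilde r}=\mathcal{F}(r)^T\nabla H_c+\mathcal{F}(r)^TB\mathcal{F}(r)\dot r-K_{Ds}\dot{\tilde r}-K_{Ps}\tilde r-\omega .
\end{equation*}
The controller state obeys $\dot{\tilde z}=\mathcal{F}(r)\dot r$.

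The gradient of $V_p$ is
\begin{equation*}
\nabla V_p=\left[(K_{Ps}\tilde r)^T,\;(M\dot{\tilde r})^T,\;(\nabla H_c)^T\right]^T .
\end{equation*}
I will multiply $J$ by this block by block and match the result with the equations above. Where the expression depends on $\dot r$, I replace $\dot r=\dot r^*-\dot{\tilde r}$.

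The main obstacle is making this match exact, for two reasons. First, the terms in $\dot r^*$ do not fit the autonomous form. I would handle this the way the paper implicitly does: treat the regulation case with a constant reference, or absorb $\dot r^*$ so that $\mathcal{F}\dot r$ is identified with $-\mathcal{F}\dot{\tilde r}$. Second, the signs of the $\mathcal{F}$ blocks and of $\omega$ must come out consistently. So I would state the sign convention explicitly, and if necessary note that $J$ equals a skew-symmetric part minus the positive semidefinite block $\mathrm{diag}\{\0,\,M^{-1}\Pi M^{-1},\,\0\}$. Positive definiteness of $\Pi$ follows from $K_{Ds}>0$ and $\mathcal{F}^TB\mathcal{F}\ge 0$.

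\textbf{Part (ii).} Set $\omega=0$ and show that $V_p$ is a valid Lyapunov function. It is positive definite: $M>0$ and $K_{Ps}>0$, and $H_c$ has its minimum at $\tilde z=0$ by the choice of $\Phi_s$. Differentiating along the trajectories,
\begin{equation*}
\dot V_p=\nabla V_p^T J\nabla V_p .
\end{equation*}
The skew-symmetric part of $J$ cancels, leaving
\begin{equation*}
\dot V_p=-\dot{\tilde r}^T\Pi\dot{\tilde r}\le 0 .
\end{equation*}
Equivalently, the damping and spring-power terms can be computed directly, as in the previous proof. Lyapunov's direct method then gives stability of the origin.

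Asymptotic stability is not claimed here. $\dot V_p$ is only negative semidefinite, which is presumably why the extra signal $\omega$ is introduced later.
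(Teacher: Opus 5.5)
There is a genuine gap at the one step that carries the proof; your substitution and your final Lyapunov computation otherwise match the paper. From the closed loop you correctly derive $M\ddot{\tilde r}=\mathcal{F}(r)^T\nabla H_c+\mathcal{F}(r)^TB\mathcal{F}(r)\dot r-K_{Ds}\dot{\tilde r}-K_{Ps}\tilde r-\omega$ and $\dot{\tilde z}=\mathcal{F}(r)\dot r$. However, you never give a reason for the $\dot r^*$ terms to disappear. Computing $\dot V_p$ directly from these equations with $\omega=0$ gives
\begin{equation*}
\dot V_p=-\dot{\tilde r}^T\Pi\dot{\tilde r}+\left(\mathcal{F}(r)\dot r^*\right)^T\left(\nabla H_c+B\mathcal{F}(r)\dot{\tilde r}\right).
\end{equation*}
The term $(\mathcal{F}(r)\dot r^*)^T\nabla H_c$ does not vanish at $\dot{\tilde r}=0$, so nothing can be concluded without removing it.

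The missing idea, which is exactly what the paper's proof invokes, is that the VRPs move as a rigid structure ($r_i^*=r_S+R_Sr_i^S$). Hence $\mathcal{F}(r)\dot r^*=0$, and therefore $\dot{\tilde z}=-\mathcal{F}(r)\dot{\tilde r}$. Your two substitutes do not do this job. A constant reference proves only the regulation special case, whereas the proposition is about tracking a moving VRS, which is why the feedforward $M\ddot r^*$ appears. ``Absorbing $\dot r^*$'' merely restates the needed conclusion.

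When you write the step, justify it. For a translating VRS ($\varphi_S$ constant, as in the simulations), $\dot r^*=\mathbf 1_n\otimes\dot r_S$, and $(D^T\otimes\I_2)(\mathbf 1_n\otimes\dot r_S)=(D^T\mathbf 1_n)\otimes\dot r_S=0$ because every column of $D$ sums to zero. So the identity holds for every $r$. If the VRS also rotates, rigidity only yields $\mathcal{F}(r^*)\dot r^*=0$. The $k$-th entry of $\mathcal{F}(r)\dot r^*$ is then $\dot\varphi_S(s_k\Delta r^*_{xk}-c_k\Delta r^*_{yk})$ with $\Delta r^*_k=r_i^*-r_j^*$, which is generally nonzero for $r\neq r^*$. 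That case therefore needs an extra assumption.

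Second, the sign issue you leave open must be settled, not covered by a ``convention''. With $\mathcal{F}(r)\dot r^*=0$, the closed loop has $\dot{\tilde z}=-\mathcal{F}(r)M^{-1}(M\dot{\tilde r})$, and $\omega$ enters as $-M^{-1}\omega$. This is consistent with the term $D(\mathcal{F}(r)^Tf-\omega)$ in the next proposition.

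The printed $(3,2)$ block $\mathcal{F}(r)M^{-1}$ is the transpose of the $(2,3)$ block, not its negative transpose. So the printed $J$ is \emph{not} a skew-symmetric matrix minus $\mathrm{diag}\{0,M^{-1}\Pi M^{-1},0\}$. With it, $\nabla V_p^TJ\nabla V_p$ would contain the extra term $2\dot{\tilde r}^T\mathcal{F}(r)^T\nabla H_c$. Replace that block by $-\mathcal{F}(r)M^{-1}$. Then your cancellation argument, or the direct formula above with $\mathcal{F}(r)\dot r^*=0$, gives $\dot V_p=-\dot{\tilde r}^T\Pi\dot{\tilde r}\le 0$, and part (ii) follows as in the paper.
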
 
\begin{proof}
First, to prove (i), we replace (\ref{eq:v_positioncontrol}) in the dynamics (\ref{eq:dynamics_CbI_1}), and as the VRP maintain a rigid geometric relationship, then $\mathcal{F}(r)\dot{r}^*=\0_{1\times m}$. 
To prove (ii), we consider (\ref{eq:Lyapunov_function_2}) as a Lyapunov candidate function for the closed loop system. Its time derivative is as follows:
\begin{equation*}
\begin{split}
\dot{V}_p&=-\dot{\tilde{r}}^T\Pi\dot{\tilde{r}}\leq 0,
\end{split}
\end{equation*}
which ensures stability of the desired equilibrium.
\end{proof}

\begin{remark}\label{rem:ECB_control_posicion}
In this case, the ECBs do not serve the purpose of positioning the agents in a specific formation but rather prevent collisions among them. For this reason, the rank of the matrix $\mathcal{F}(r)$ does not influence the stability of the control law, and it is advantageous to use as many ECBs as possible.
\end{remark}

Selecting the correct values of $z^*$, ensures that if $\tilde{r}$ and $\dot{\tilde{r}}$ tend to to zero, then $\tilde{z}=0$, and the formation control is achieve. 
In the proposition below, we design the control law $\omega$ to achieve the asymptotic stability of the equilibrium point $(\tilde{r}_*=0,\dot{\tilde{r}}_*=0)$.

\begin{proposition}
Consider the position and velocity errors of the closed loop dynamics (\ref{eq:closed_loop_dynamics_v_w}) and the new variable
\begin{equation*}
\xi:=
\begin{bmatrix}
       \tilde{r}\\[0.3em]
	   \dot{\tilde{r}}\\
\end{bmatrix}
\end{equation*}

\begin{itemize}
\item[(i)]
The reduced closed-loop dynamics can be written as follows:
\begin{equation*}\label{eq:xi_dynamics}
\dot{\xi}=H\xi+D\left(\mathcal{F}(r)^Tf-\omega\right)
\end{equation*}
where $H$ and $D$ are matrices of dimensions $(4n\times 4n)$ and $(4n\times 2n)$, respectively, with expressions given by 
\begin{equation*}
H=\begin{bmatrix}
       \0_{2n} & \I_{2n} \\
       -M^{-1}K_{Ps} & -M^{-1}K_{Ds}
\end{bmatrix}\quad
D=\begin{bmatrix}
       \0_{2n} \\[0.2em]
       M^{-1}
\end{bmatrix}
\end{equation*}

\item[(ii)]
Defining $\lambda:=2D^TQ\xi \in\mathbb{R}^{2n}$, with $Q$ a positive definite matrix of size $(4n\times 4n)$, and choosing
\begin{equation}\label{eq:omega}
\omega=\dfrac{\rho}{\norm{\lambda}}\lambda
\end{equation}
if $\rho>\norm{\mathcal{F}(r)^Tf}$, then $(\tilde{r}_*=0,\dot{\tilde{r}}_*=0)$ is asymptotically stable.

\end{itemize}

\end{proposition}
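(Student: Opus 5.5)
The plan is to treat the position-error dynamics as a Hurwitz linear system driven by a bounded disturbance, $\mathcal{F}(r)^Tf$. The discontinuous term $\omega$ in (\ref{eq:omega}) then acts as a unit-vector dominating control that cancels it.

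\emph{Part (i).} I start from the momentum equation (\ref{eq:pdotexpression}) with $\dot p=M\ddot r$ and $F=-\mathcal{F}(r)^Tf$, and substitute $\nu$ from (\ref{eq:v_positioncontrol}). Using $\ddot r=\ddot r^*-\ddot{\tilde r}$, the feedforward terms $M\ddot r^*$ cancel, which gives
\begin{equation*}
M\ddot{\tilde r}=-K_{Ps}\tilde r-K_{Ds}\dot{\tilde r}+\mathcal{F}(r)^Tf-\omega .
\end{equation*}
Multiplying by $M^{-1}$ and stacking with the trivial identity $\frac{d}{dt}\tilde r=\dot{\tilde r}$ yields exactly $\dot\xi=H\xi+D(\mathcal{F}(r)^Tf-\omega)$ with the stated $H$ and $D$. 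This step is routine algebra. The $\tilde z$ dynamics are simply dropped, and $f$ is regarded as an exogenous signal.

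\emph{Part (ii).} First, $H$ is Hurwitz. It is the state matrix of $M\ddot x+K_{Ds}\dot x+K_{Ps}x=0$ with $M,K_{Ps},K_{Ds}$ positive definite, and the energy $\frac12\dot x^TM\dot x+\frac12 x^TK_{Ps}x$ together with LaSalle (or a direct eigenvalue argument) gives exponential stability. Hence for any $P>0$ the Lyapunov equation $H^TQ+QH=-P$ has a unique solution $Q>0$. I take $Q$ to be this solution; the statement's ``$Q$ positive definite'' must be read this way, otherwise the argument fails. I then use $V_\xi=\xi^TQ\xi$ and compute
\begin{equation*}
\dot V_\xi=-\xi^TP\xi+\lambda^T\mathcal{F}(r)^Tf-\rho\norm{\lambda}\le-\xi^TP\xi+\norm{\lambda}\left(\norm{\mathcal{F}(r)^Tf}-\rho\right)<0
\end{equation*}
for $\xi\neq0$, by Cauchy--Schwarz and the assumption $\rho>\norm{\mathcal{F}(r)^Tf}$. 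Since $\dot V_\xi\le-\xi^TP\xi$, we get (exponential) asymptotic stability of $\xi=0$.

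\emph{Main obstacles.} There are two delicate points.

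The first is that the hypothesis $\rho>\norm{\mathcal{F}(r)^Tf}$ must hold along trajectories, which requires $f$ to stay bounded even though the spring forces are barrier-type. I would handle this with the previous proposition. With $\omega$ present, $\dot V_p=-\dot{\tilde r}^T\Pi\dot{\tilde r}-\dot{\tilde r}^T\omega$. Moreover $\dot{\tilde r}^T\omega$ has the sign of $\dot{\tilde r}^TM^{-1}(\cdot)$ weighted by $Q$, so it is not automatically sign-definite. I would therefore either use $\dot V_\xi\le0$ to keep $\xi$ in a compact sublevel set, or argue on a compact set of initial conditions. From this, $\dot{\tilde z}=\mathcal{F}(r)\dot r$ is bounded on finite intervals, and boundedness of $H_c(\tilde z)$ keeps every $z_k$ bounded away from zero. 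Then $f$ is bounded, because the entries of $\mathcal{F}(r)$ are sines and cosines and $\norm{\mathcal{F}(r)}$ is bounded by a graph constant. So a constant $\rho$ exceeding this bound exists. I expect this to be the hardest part, and it may need to be stated as a standing assumption.

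The second is that $\omega$ is discontinuous at $\lambda=0$. Solutions should then be understood in the Filippov sense, with the inequality above holding for every element of the set-valued map, following the standard unit-vector control arguments.
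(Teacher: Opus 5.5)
Your proposal is correct and coincides with the paper's proof. The paper likewise uses that $H$ is Hurwitz to obtain $Q$ from $H^TQ+QH=-P$, takes $V=\xi^TQ\xi$, and bounds $\lambda^T(\mathcal{F}(r)^Tf-\omega)\le\norm{\lambda}(\norm{\mathcal{F}(r)^Tf}-\rho)$ to get $\dot V<0$; its part (i) is exactly the substitution you carry out. Your extra remarks on Filippov solutions at $\lambda=0$ and on boundedness of $f$ go beyond the paper, which simply treats $\rho>\norm{\mathcal{F}(r)^Tf}$ as a hypothesis along trajectories. Since asymptotic stability is local, it suffices to confine $\xi$ to a small sublevel set of $V$, where $r$ stays near the non-degenerate VRS. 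There $f$ is bounded (given bounded $\dot r^*$) because $z=\mathcal{F}(r)r$ depends algebraically on $r$, so your detour through $\dot{\tilde z}$ and $H_c$ is unnecessary.
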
 
\begin{proof}
We propose the new Lyapunov candidate function
\begin{equation*}
V(\xi)=\xi^T Q \xi
\end{equation*}
Since $H$ has eigenvalues with all negative real parts, for any positive definite symmetric matrix $P$, the equation
\begin{equation*}
H^TQ+QH=-P
\end{equation*}
provides a unique solution $Q$. 
Thus, we can express the time derivative of $V$ as
\begin{equation}\label{eq:lyapunov_w_derivative}
\dot{V}=-\xi^TP \xi+2\xi^T QD\left(\mathcal{F}(r)^Tf-\omega\right)
\end{equation}
The second term of (\ref{eq:lyapunov_w_derivative}) becomes
\begin{equation*}
\begin{split}
\lambda^T\left(\mathcal{F}(r)^Tf-\omega\right)&=\lambda^T\left(\mathcal{F}(r)^Tf-\dfrac{\rho}{\|\lambda\|}\lambda\right)\\
&=\lambda^T\mathcal{F}(r)^Tf-\rho\|\lambda\|\\
&\leq \norm{\lambda}\left(\norm{\mathcal{F}(r)^Tf}-\rho\right)\\
\end{split}
\end{equation*}
Thus, by choosing $\rho>\norm{\mathcal{F}(r)^Tf}$, the control signal $\omega$ ensures that $\dot{V}$ is negative definite, and consequently, $\xi\rightarrow 0$. 
\end{proof}

\section{Simulations}\label{sec:sim}


In this section, through two different scenarios, we present simulation results that validate the correct performance of the control strategies presented above.
The simulated MAS consists of six agents moving in a horizontal plane, modeled as rigid bodies. 

The desired topology and orientation are shown in Fig. \ref{fig:Escenario}.
Depending on the scenario to be simulated, the MAS can be represented by the minimally rigid directed graph $\mathcal{G}_1$ with $m=9$ (Fig. \ref{fig:MAS_graph_2}) or a rigid graph $\mathcal{G}_2$ with $m=m_{max}=15$ (Fig. \ref{fig:MAS_graph_4}).
In Fig. \ref{fig:VRS} the VRS is shown.
The trajectory and its corresponding path, shown in Fig. \ref{fig:Ref_trajectory_x} and Fig. \ref{fig:Ref_trajectory_y}, respectively, will be used in both scenarios.
Table \ref{tab:initial_conditions} shows the initial conditions of the mobile bases.

\begin{figure}[H]
\captionsetup[subfigure]{justification=centering}
    \begin{subfigure}[t]{0.49\columnwidth}\centering
        \includegraphics[width=0.9\columnwidth]{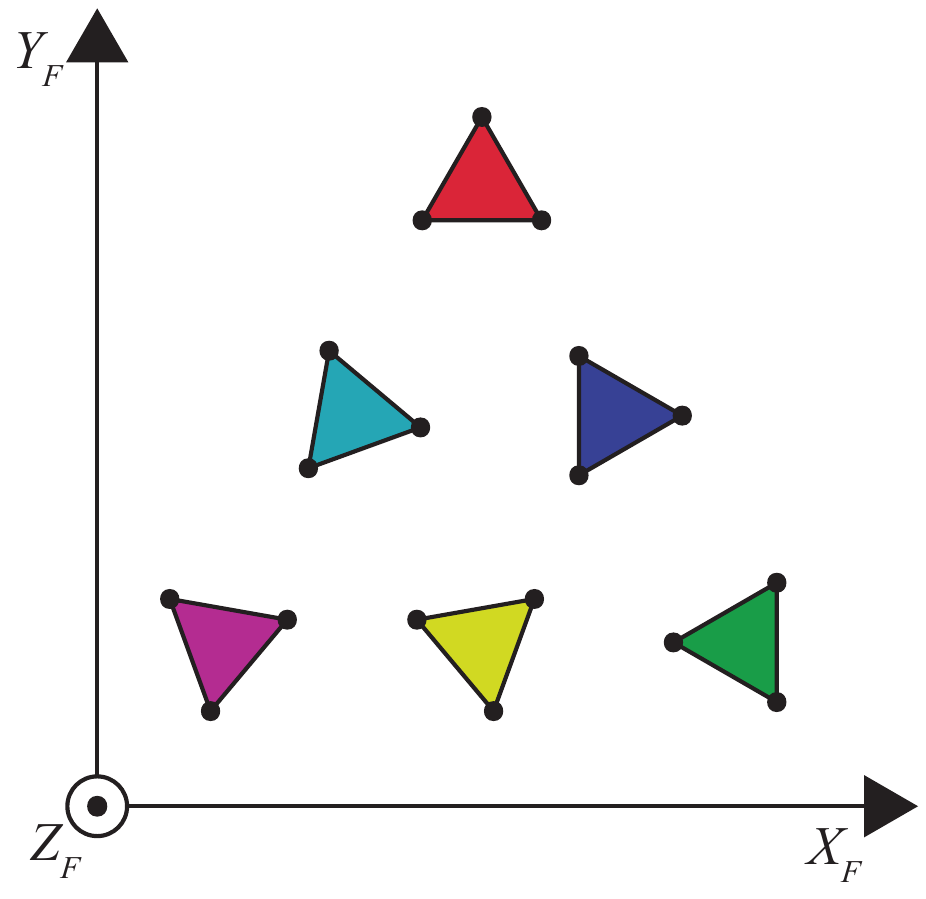}
        \caption{\label{fig:Escenario}Desired topology and orientation.}
    \end{subfigure}
    \begin{subfigure}[t]{0.49\columnwidth}\centering
        \includegraphics[width=0.9\columnwidth]{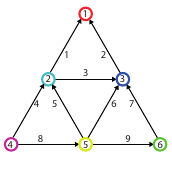}
        \caption{\label{fig:MAS_graph_2}Graph $\mathcal{G}_1$ with $n=6$, $m=9$.}
    \end{subfigure}
    \begin{subfigure}[t]{0.49\columnwidth}\centering
        \includegraphics[width=0.9\columnwidth]{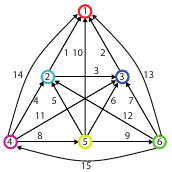}
        \caption{\label{fig:MAS_graph_4}Graph $\mathcal{G}_2$ with $n=6$, $m=15$.}
    \end{subfigure}
    \begin{subfigure}[t]{0.49\columnwidth}\centering
        \includegraphics[width=0.9\columnwidth]{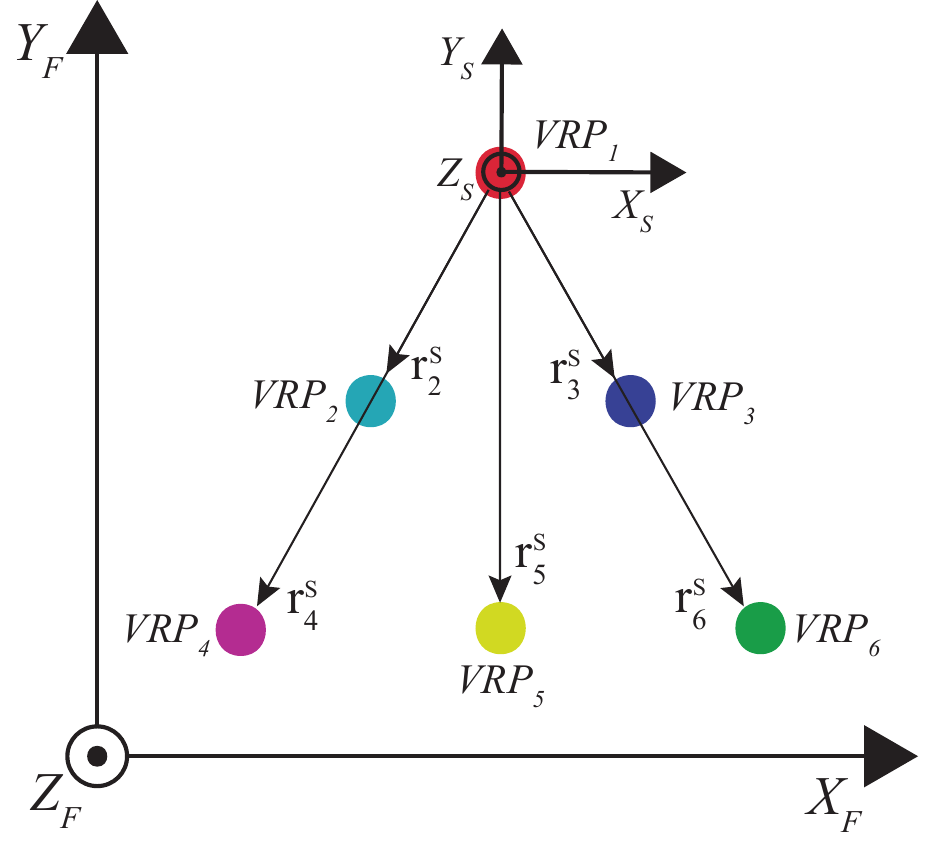}
        \caption{\label{fig:VRS}VRS.}
    \end{subfigure}    
\caption{\label{fig:Ref_formation}Desired formation and graph representations.}
\end{figure}

\begin{figure}[H]
\captionsetup[subfigure]{justification=centering}
    \begin{subfigure}[t]{0.33\columnwidth}\centering
        \includegraphics[width=\columnwidth]{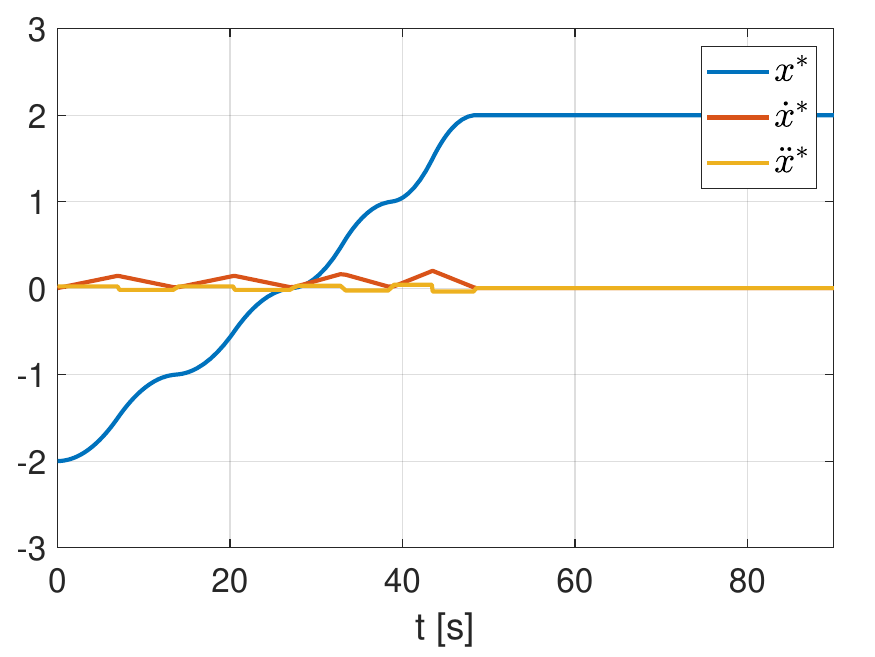}
        \caption{\label{fig:Ref_trajectory_x}Trajectory in $x$.}
    \end{subfigure}%
    \begin{subfigure}[t]{0.33\columnwidth}\centering
        \includegraphics[width=\columnwidth]{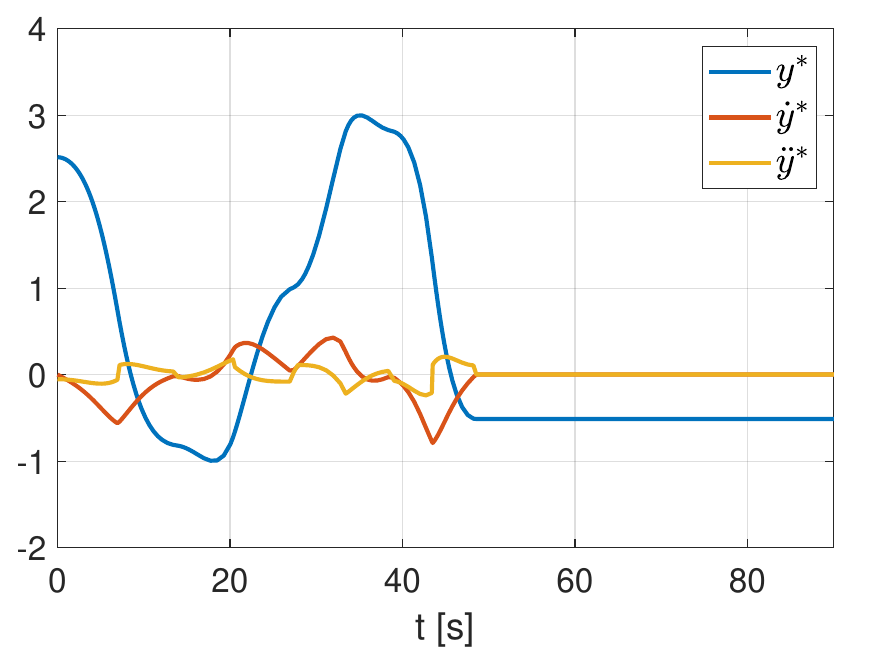}
        \caption{\label{fig:Ref_trajectory_y}Trajectory in $y$.}
    \end{subfigure}%
    \begin{subfigure}[t]{0.33\columnwidth}\centering
        \includegraphics[width=\columnwidth]{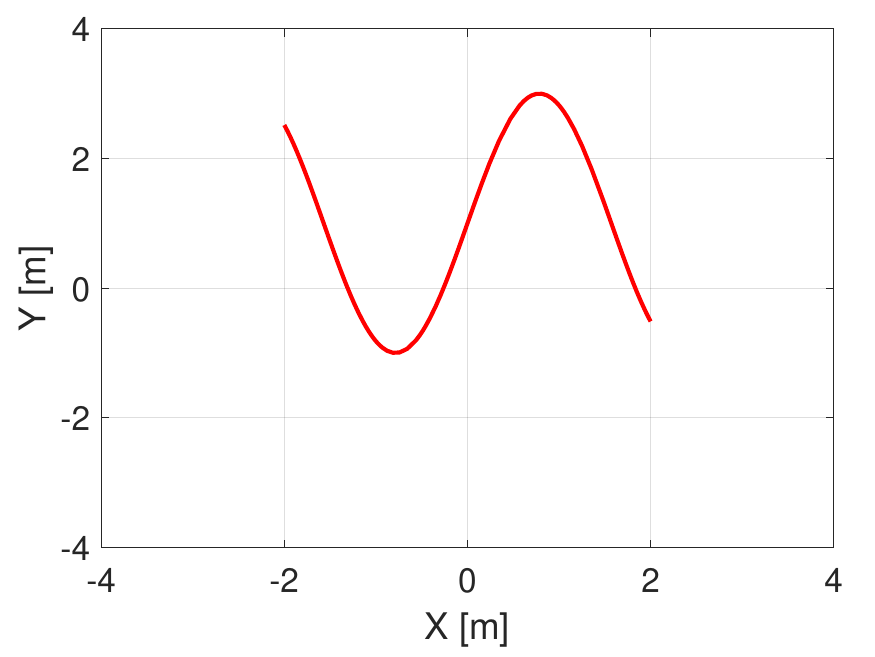}
        \caption{\label{fig:Ref_path}Path in the plane.}
    \end{subfigure}
\caption{\label{fig:Ref_trajectory}Desired trajectory to be used in the different scenarios.}
\end{figure}

\begin{table}[H]
\begin{center}
\caption{Table of initial conditions.}
\label{tab:initial_conditions}
\begin{tabular}{@{}c@{}c@{}c@{}c@{}c@{}c@{}c@{}}
\hline\rule{0pt}{3ex}
 & $1$ & $2$ & $3$ & $4$ & $5$ & $6$ \\\hline \rule{0pt}{3ex}
$q$ & $[-6,5,\frac{\pi}{6}]$ & $[-7,3,0]$ & $[-5,3,\frac{\pi}{4}]$  & $[-7,2,\frac{\pi}{2}]$ & $[-6,1,\frac{3\pi}{2}]$ & $[-6,0,\frac{\pi}{4}]$ \\ \hline\rule{0pt}{3ex}
$\dot{q}$ & $[\frac{1}{4}, \frac{1}{13}, 0]$ & $[\frac{1}{5}, \frac{1}{5}, 0]$ & $[\frac{1}{10}, \frac{3}{20}, 0]$ & $[\frac{1}{50}, \frac{1}{70}, 0]$ & $[\frac{1}{7}, \frac{1}{5}, 0]$ & $[\frac{1}{4}, \frac{1}{30}, 0]$ \rule[-1ex]{0pt}{3.5ex} \\ \hline
\end{tabular}
\end{center}
\end{table}

The Hamiltonian function (\ref{eq:Hamitonian_ECB}) chosen for the ECB$_k$, with $k=1,\cdots,m$, is
\begin{equation*}
H_{ck}(\tilde{z}_k)=\begin{cases}\left(\tilde{z}_k+\frac{{z_k^*}^3}{2\left(\tilde{z}_k+z_k^*\right)^2}-\frac{z_k^*}{2}\right)\mu_R\quad \textrm{if}\;\; \tilde{z}_k<0\\[1.5em]
\frac{3}{2z_k^*}\mu_R\tilde{z}_k^2\quad \textrm{if}\;\; \tilde{z}_k\geq 0
\end{cases}
\end{equation*}
which possesses a continuous derivative throughout its domain.

The values of the ECBs parameters are $\mu_R=10$, $B=20I_9$ and the desired lengths of all distances are set to $z_k^*=1.5$, with $k=1,\cdots,m$.
The mobile bases parameters are:  $m_b=1.5$ Kg, $L=0.2$m, $r=0.024$m, $I_b=0.0216$ Kgm$^2$, and $J_r=12.24$e$^{-6}$ Kgm$^2$.

\subsection{Scenario 1: Leader Agent Control}
The graph used is $\mathcal{G}_1$ (Fig. \ref{fig:MAS_graph_2}).
Agent $i=1$ is chosen as the leader, and edge $k=1$ is utilized to control the angle (see Fig. \ref{fig:MAS_graph_2}). 
The evolution of the first two components of the reference vector $a^*=\left[x_1^*,\;y_1^*,\;\alpha_1^*\right]^T$ are shown in Fig. \ref{fig:Ref_trajectory_x} and Fig. \ref{fig:Ref_trajectory_y}, respectively. 
The third component, the desired angle, remains constant and is equal to $\alpha_1^*=60°$, which means that the topology moves in the plane with the orientation shown in Fig. \ref{fig:Escenario}.
The parameters of the control law (\ref{eq:v_leaderagent}) are $K_{Pa}=20\I_3,\, K_{Ia}=20\I_3 $.

In Fig. \ref{fig:Sim_D_BM}, it can be verified that the equilibrium point ($p_*=0,\tilde{z}_*=0 ,\tilde{a}_*=0$) is stable as the MAS comes to a stop (Fig. \ref{fig:Sim_D_BM_x} and \ref{fig:Sim_D_BM_y}), distances between agents reach their desired values (Fig. \ref{fig:Sim_D_BM_Tz}), and the error $\tilde{a}$ tends to zero (Fig. \ref{fig:Sim_D_BM_Ta}). Observing the evolution of this last figure confirms that the \textit{Leader Agent Control} stabilizes the MAS at the equilibrium point when the reference $a^*$ is constant (starting from $t=50s$).

\begin{figure}[b!]
    \begin{subfigure}[t]{0.49\columnwidth}\centering
        \includegraphics[width=\columnwidth]{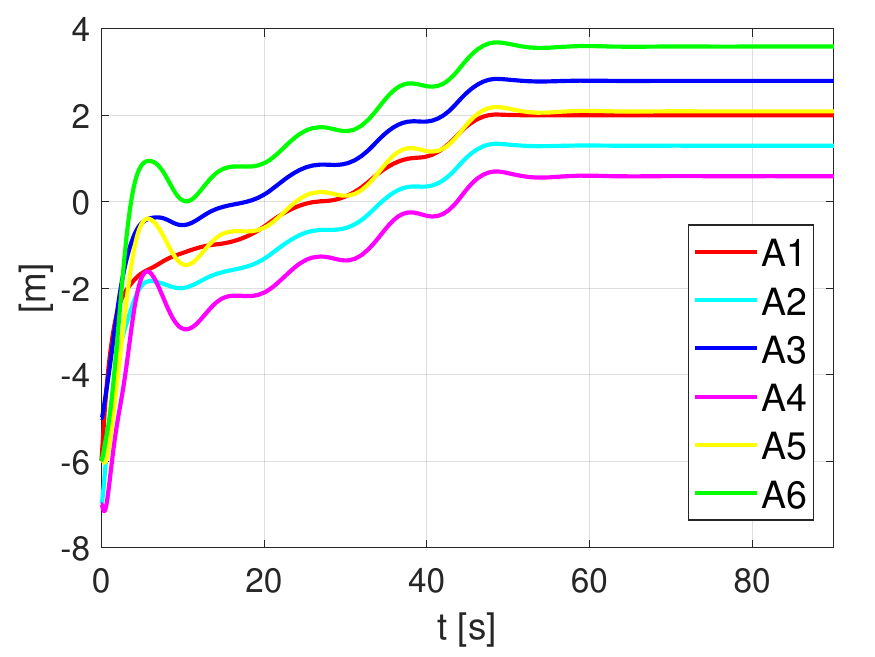}
        \caption{\label{fig:Sim_D_BM_x}$x$-coordinate.}
    \end{subfigure}
    \begin{subfigure}[t]{0.49\columnwidth}\centering
        \includegraphics[width=\columnwidth]{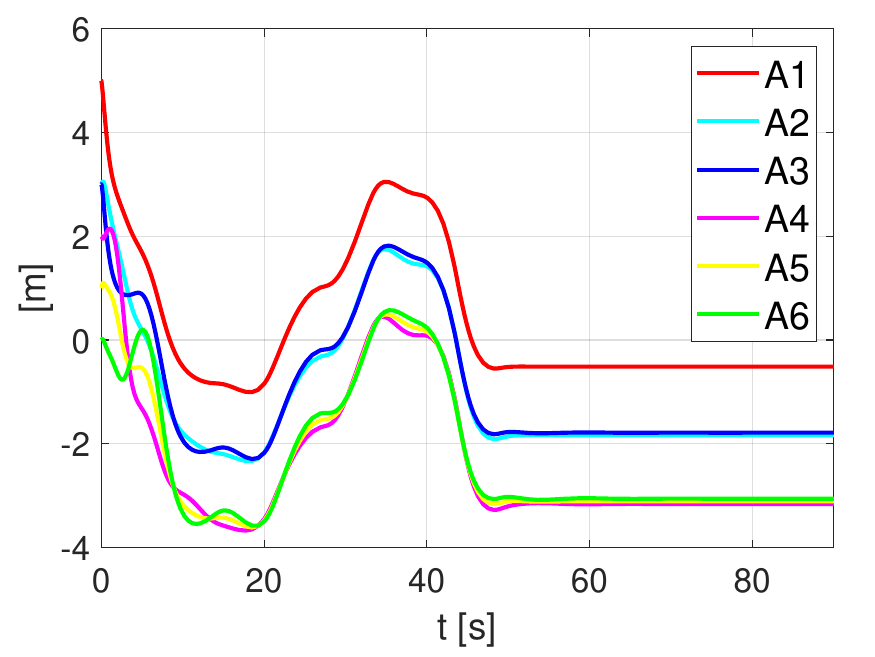}
        \caption{\label{fig:Sim_D_BM_y}$y$-coordinate.}
    \end{subfigure}
    \begin{subfigure}[t]{0.49\columnwidth}\centering
        \includegraphics[width=\columnwidth]{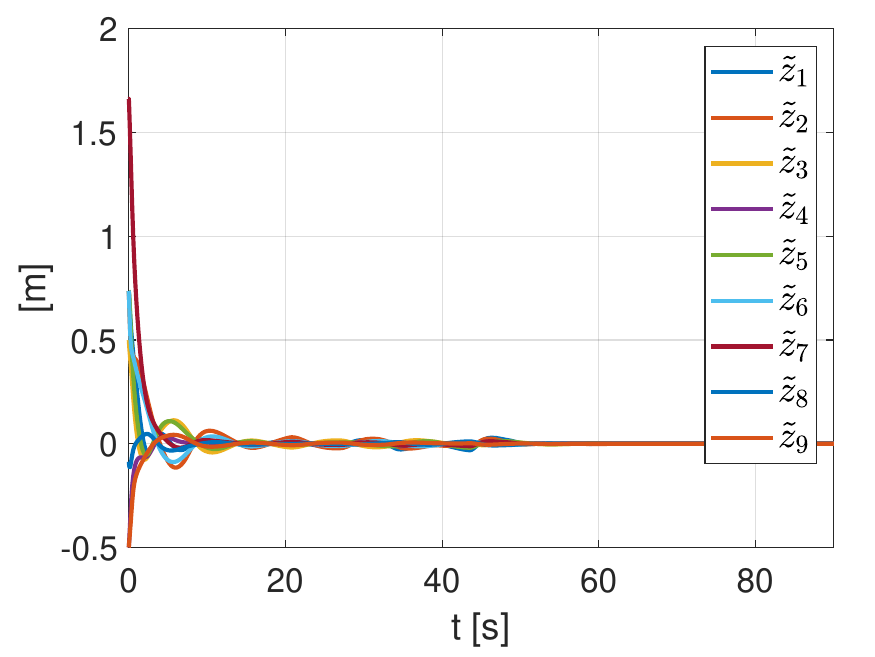}
        \caption{\label{fig:Sim_D_BM_Tz}Distance error $\tilde{z}$.}
    \end{subfigure}
    \begin{subfigure}[t]{0.49\columnwidth}\centering
        \includegraphics[width=\columnwidth]{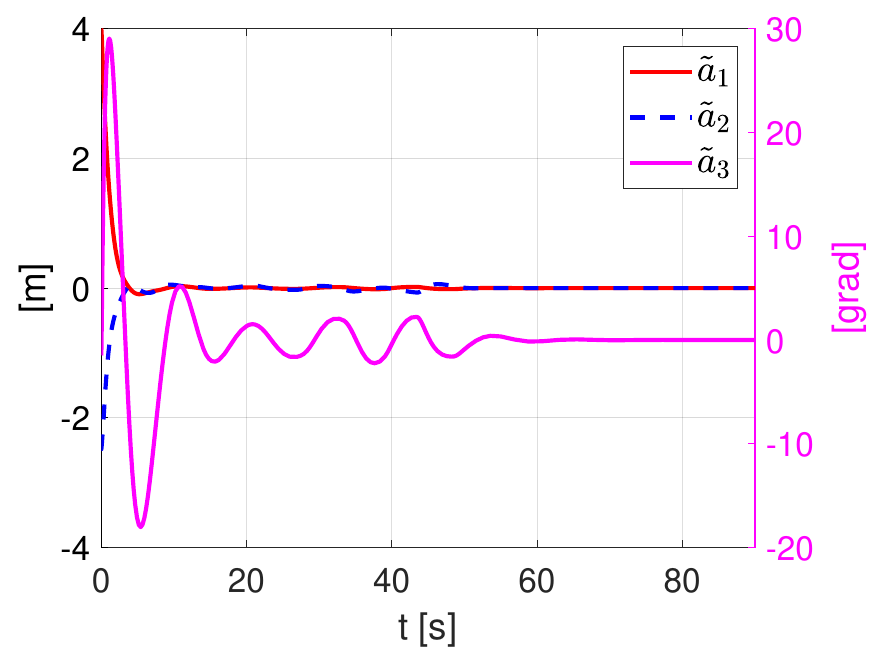}
        \caption{\label{fig:Sim_D_BM_Ta}Error $\tilde{a}$.}
    \end{subfigure}
\caption{\label{fig:Sim_D_BM}\textit{Leader Agent Control} strategy.}
\end{figure}

On the other hand, the evolutions in the plane at different time instants, as depicted in Fig. \ref{fig:Sim_D_BM_plane}, illustrate that the agents form and follow the trajectory with the desired topology and orientation.

\begin{figure}[t!]
    \begin{subfigure}[t]{0.49\columnwidth}
        \includegraphics[width=\columnwidth]{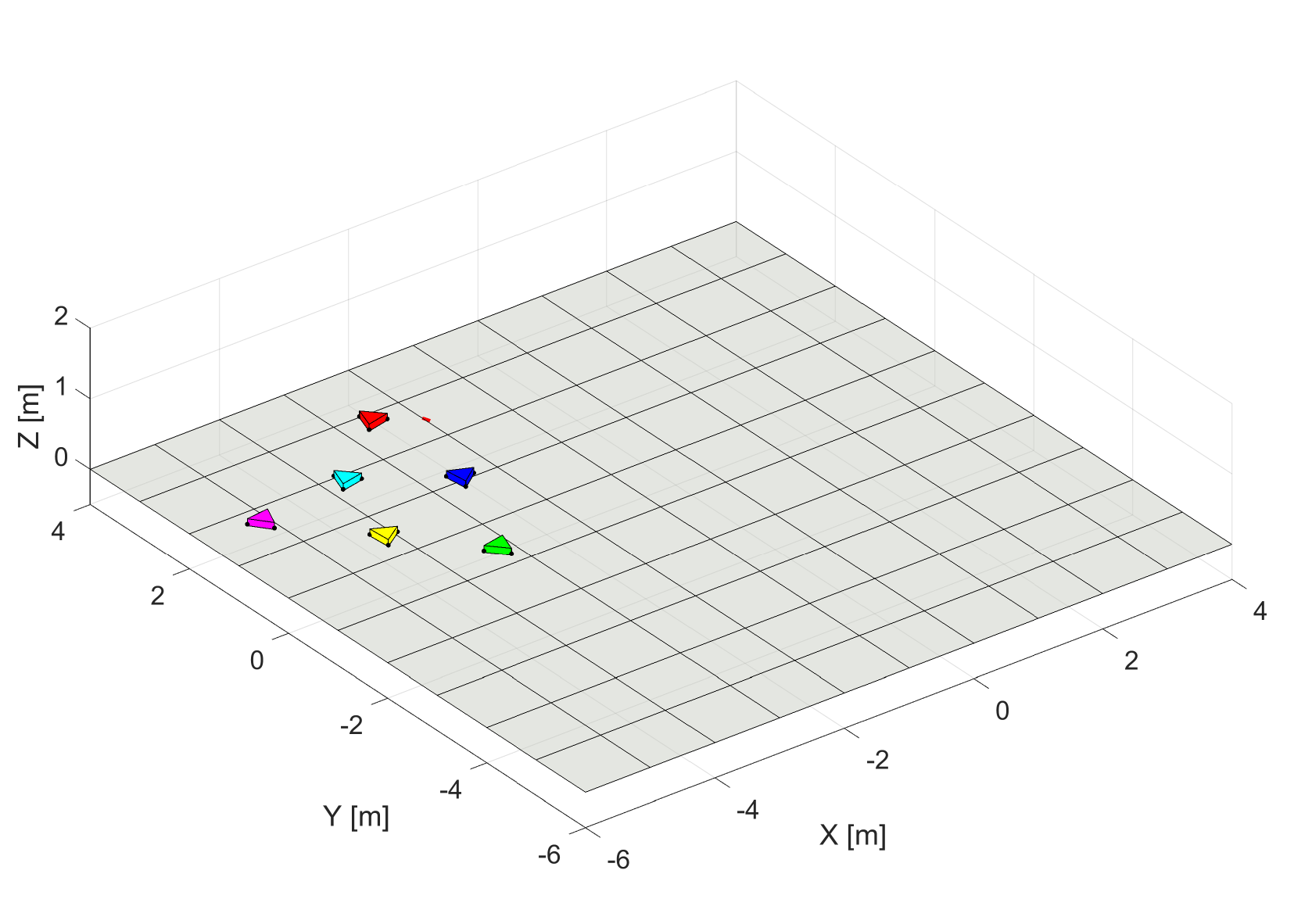}
        \caption{$t=2.5s$.}
    \end{subfigure}
    \begin{subfigure}[t]{0.49\columnwidth}
        \includegraphics[width=\columnwidth]{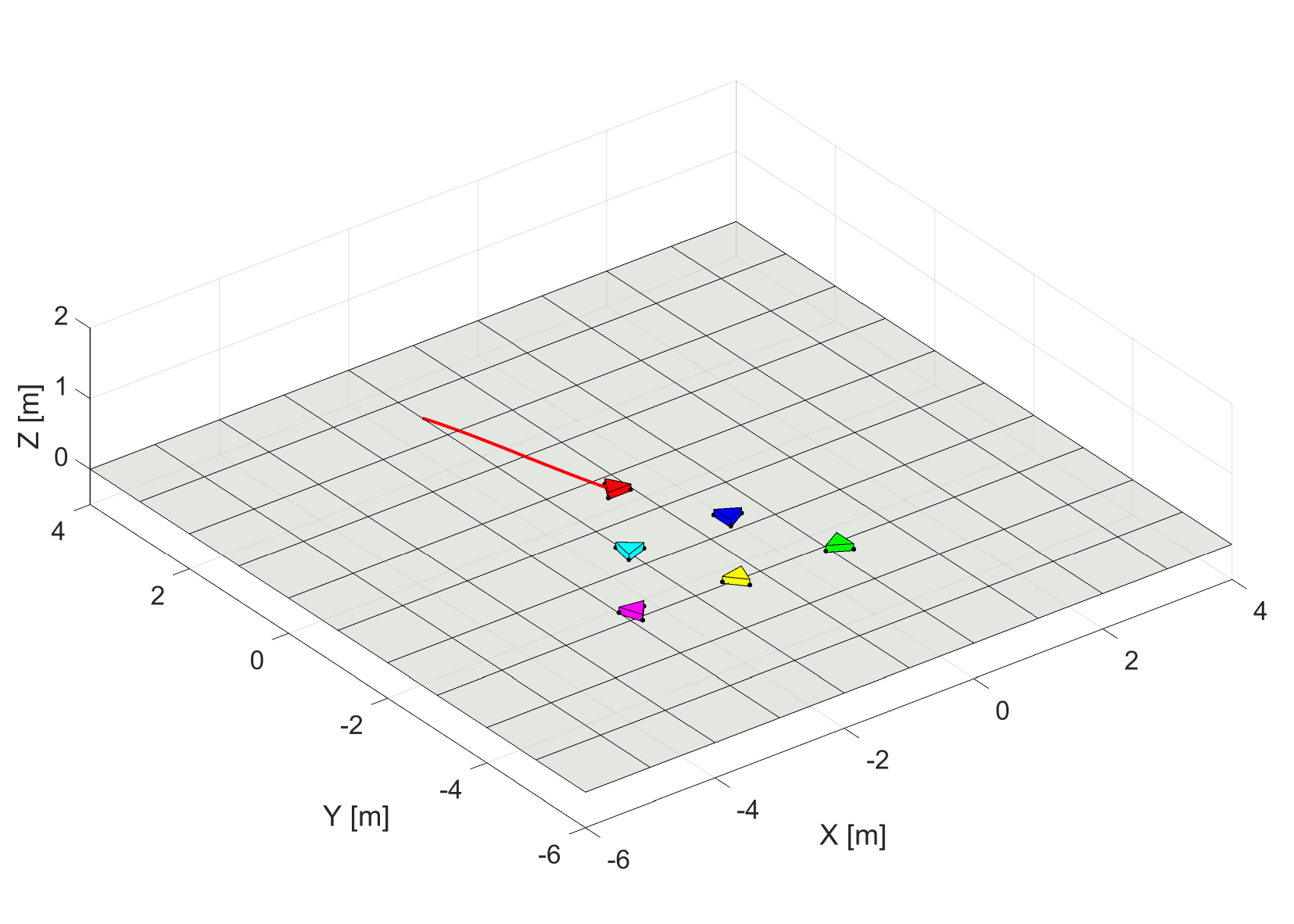}
        \caption{$t=10s$.}
    \end{subfigure}
    \begin{subfigure}[t]{0.49\columnwidth}
        \includegraphics[width=\columnwidth]{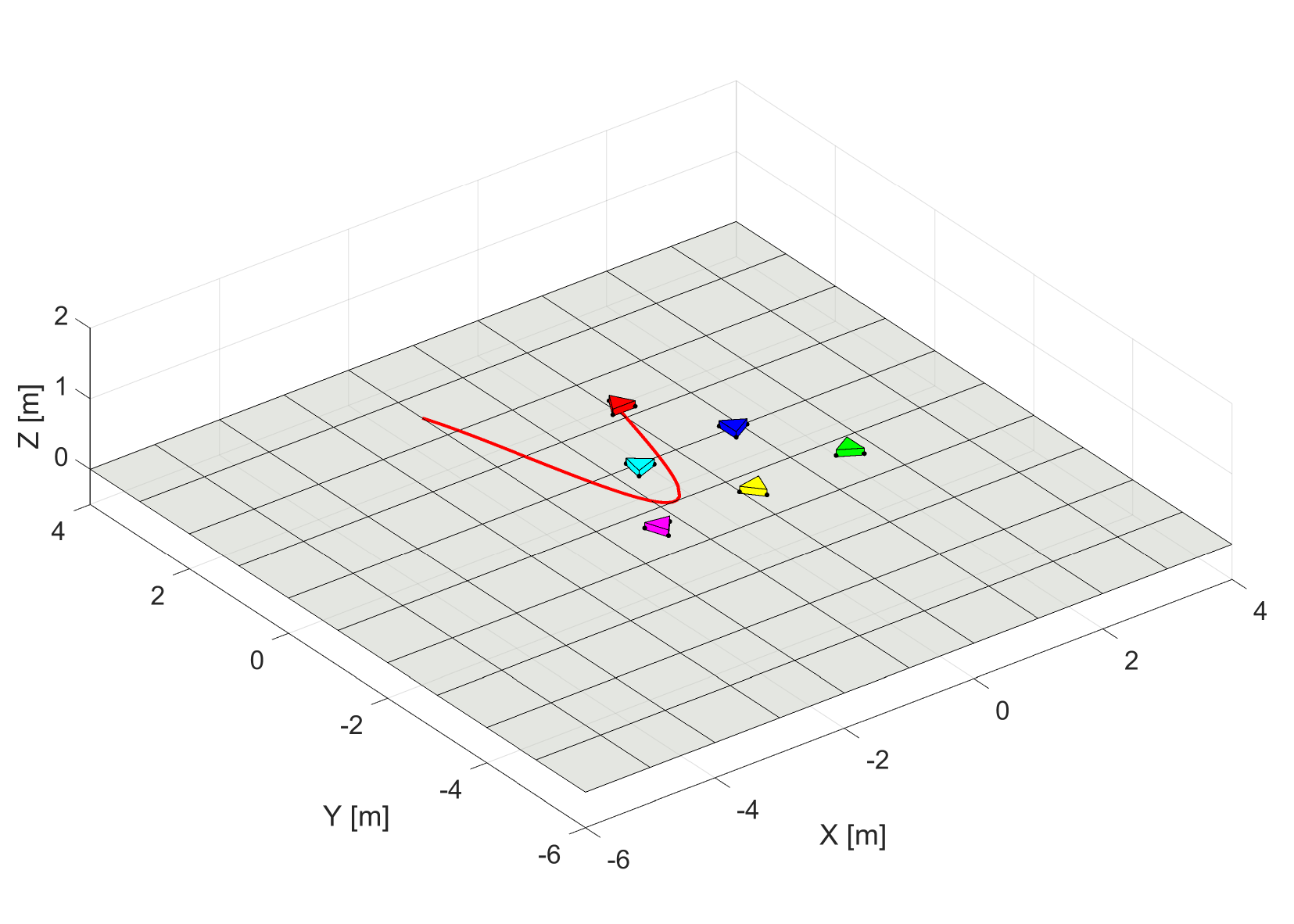}
        \caption{$t=30s$.}
    \end{subfigure} 
    \begin{subfigure}[t]{0.49\columnwidth}
        \includegraphics[width=\columnwidth]{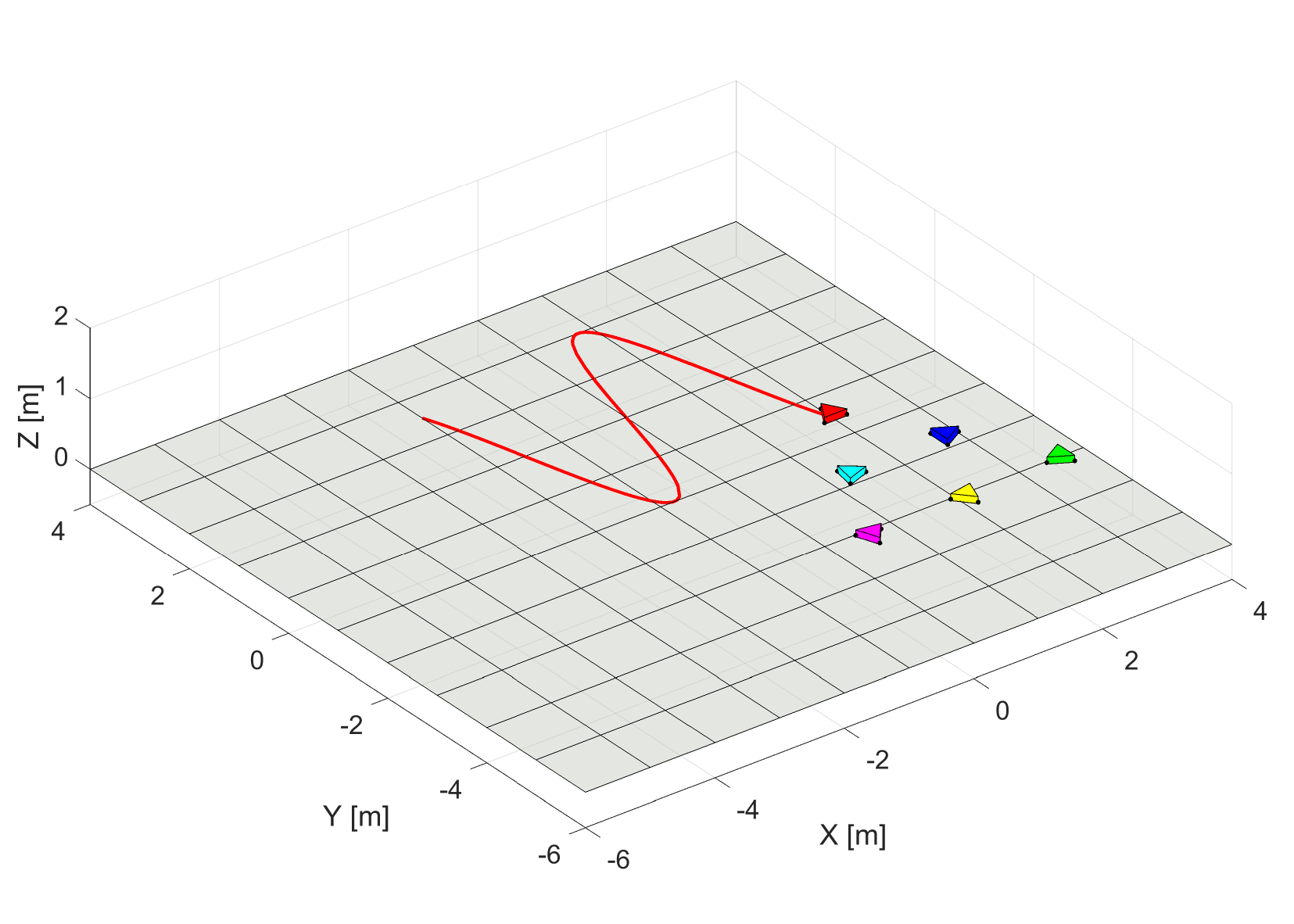}
        \caption{$t=60s$.}
    \end{subfigure}  
    \caption{\label{fig:Sim_D_BM_plane}\textit{Leader Agent Control} strategy - movements in the plane - Graph $\mathcal{G}_1$.}  
\end{figure}

\begin{remark}\label{re:desventajas_lider}
In this formation control scheme, there may be cases where, even while respecting the distance restrictions imposed by the controller, the MAS cannot achieve the desired topology or orientation. Specifically, it may happen that agents form in the desired topology but mirrored with respect to a local axis of symmetry, or that, for a certain restriction on the distance between agents, there are different possible positions for them.
This is because the essence of the controller lies in the concept of the \textit{Formation Matrix}, and as demonstrated, for it to be of full rank, it is necessary for the graph characterizing the MAS to be minimally rigid. Therefore, the number of edges will be less than the set of distances between agents that need to be controlled. 
Moreover, in the case of a topological change, caused for example by a communication failure or a dynamic reconfiguration, new distances must be controlled in order to preserve the minimal rigidity of the graph.
\end{remark}

\subsection{Scenario 2: Position Control}

In this control scheme, as emphasized in Remark \ref{rem:ECB_control_posicion}, the graph chosen is that of Fig. \ref{fig:MAS_graph_4} with $m=m_{max}=15$. 
The VRS is shown in Fig. \ref{fig:VRS}. The origin of the reference frame $S$ follows the path shown in Fig. \ref{fig:Ref_path}, with a constant orientation of  $\varphi_S = 0$.
The parameters of the control law (\ref{eq:v_positioncontrol}) are $K_{Ps}=5\I_{12} ,\, K_{Ds}=10\I_{12}$.

Simulation results of the \textit{Position Control} without $\omega$ ($\rho=0$ in (\ref{eq:omega})) are shown in the plots of Fig. \ref{fig:Sim_G}, where it can be verified that the system tends towards an equilibrium that is not the desired one. 
This is evident in the norm of the position error $|\tilde{r}|$ and the distance error $\tilde{z}$, from Fig. \ref{fig:Sim_G_NormTr} and Fig. \ref{fig:Sim_G_Tz}, respectively. Both do not converge to zero but to a value such that the agents do not form the desired topology. 
On the other hand, collision avoidance is successfully achieved, as seen in Fig. \ref{fig:SIM_G_z}, with distances between agents greater than one meter.

\begin{figure}[b!]
\captionsetup[subfigure]{justification=centering}
    \begin{subfigure}[t]{0.33\columnwidth}\centering
        \includegraphics[width=\columnwidth]{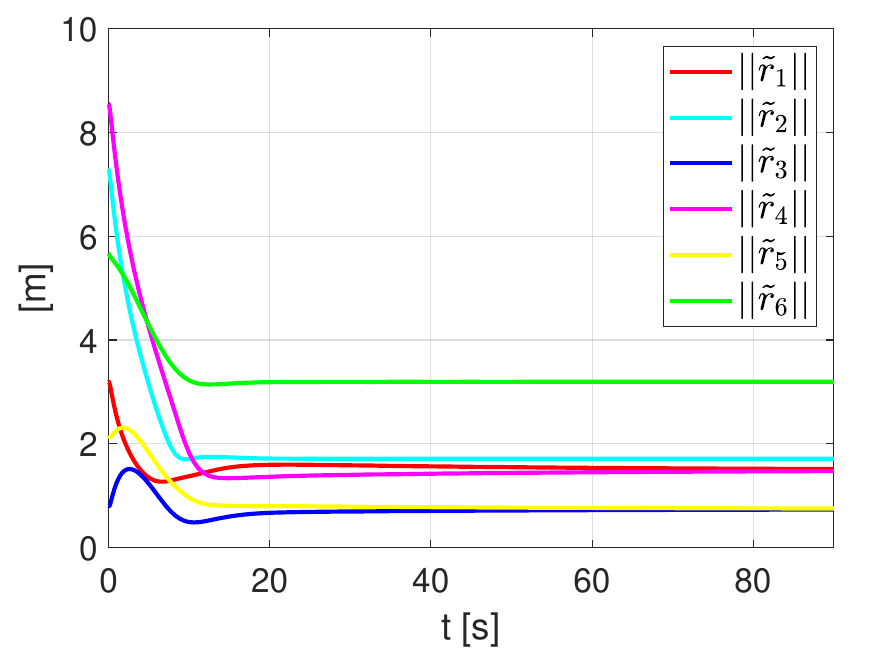}
        \caption{\label{fig:Sim_G_NormTr}Position error norm $\|\tilde{r}\|$.}
    \end{subfigure}
    \begin{subfigure}[t]{0.32\columnwidth}\centering
        \includegraphics[width=\columnwidth]{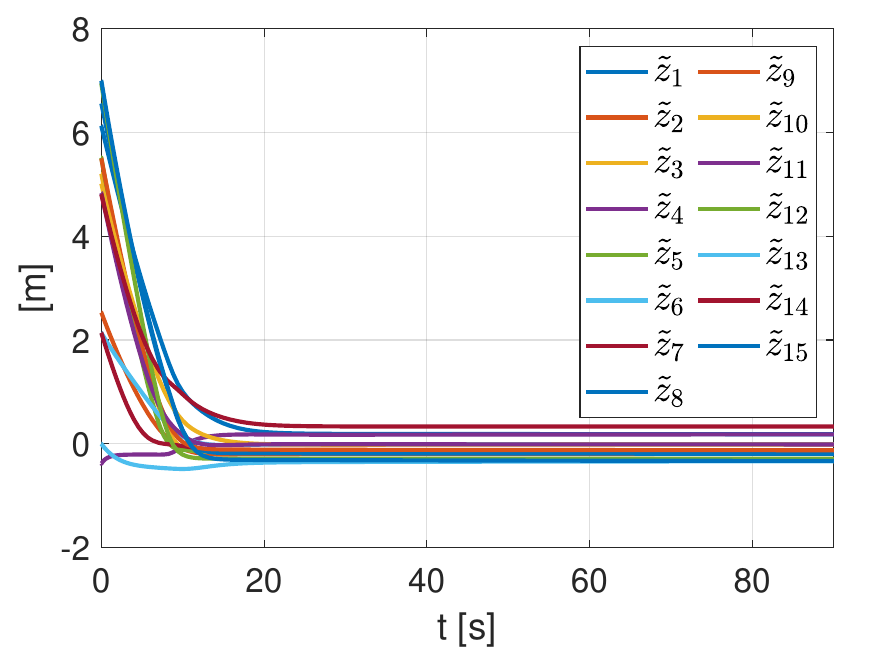}
        \caption{\label{fig:Sim_G_Tz}Distance error $\tilde{z}$.}
    \end{subfigure}
    \begin{subfigure}[t]{0.32\columnwidth}\centering
        \includegraphics[width=\columnwidth]{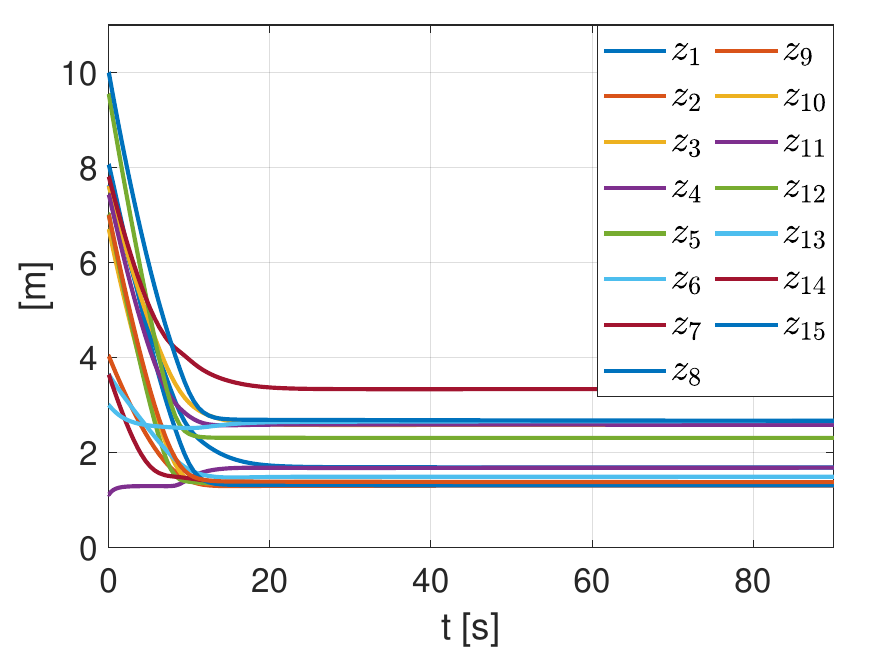}
        \caption{\label{fig:SIM_G_z}Distances $z$.}
    \end{subfigure}
\caption{\label{fig:Sim_G}\textit{Position Control} strategy without $\omega$.}
\end{figure}

The incorporation of the control signal $\omega$ ($\rho=50$ in (\ref{eq:omega})) effectively ensures that the trajectory error $\tilde{r}$ and the distance errors $\tilde{z}$ tend to zero, as can be observed in Fig. \ref{fig:Sim_H_NormTr} and Fig. \ref{fig:Sim_H_Tz}, respectively. This confirms that the control law $\omega$ ensures asymptotic stability of the desired equilibrium point $(\tilde{r}_*=0,\dot{\tilde{r}}_*=0,\tilde{z}_*=0)$.
Additionally, in Fig. \ref{fig:Sim_H_z}, it is observed that the distances are maintained above 0.9 meters throughout the simulation, ensuring that the agents never collide with each other.

The evolution of the control signal $\omega$ is shown in Fig. \ref{fig:Sim_H_w}, where it can be seen that, once the agents start following their desired trajectories, it becomes zero.

\begin{figure}[t!]
\captionsetup[subfigure]{justification=centering}
    \begin{subfigure}[t]{0.49\columnwidth}\centering
        \includegraphics[width=\columnwidth]{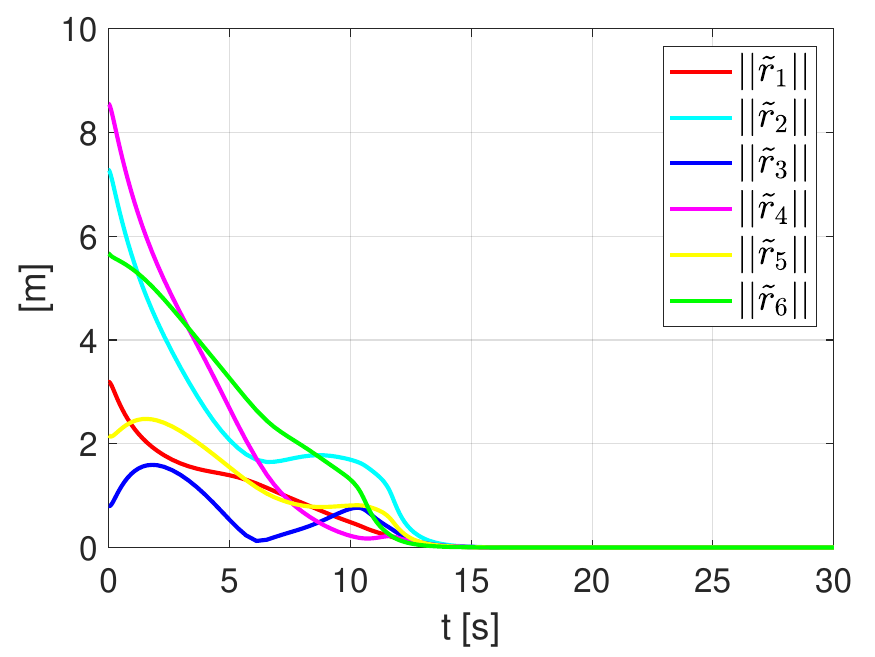}
        \caption{\label{fig:Sim_H_NormTr}Position error norm $\|\tilde{r}\|$.}
    \end{subfigure}%
    \begin{subfigure}[t]{0.49\columnwidth}\centering
        \includegraphics[width=\columnwidth]{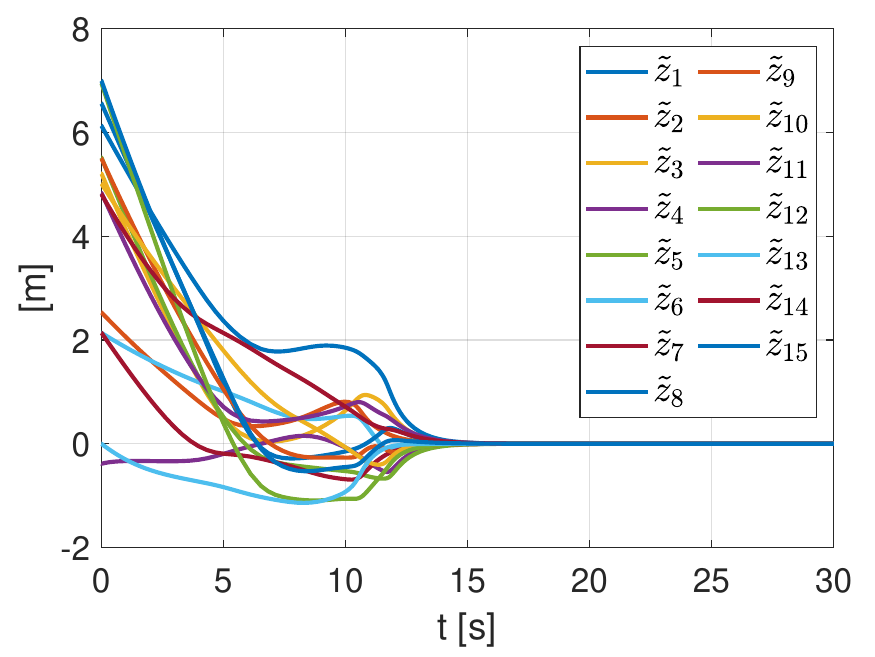}
        \caption{\label{fig:Sim_H_Tz}Distance error $\tilde{z}$.}
    \end{subfigure}
    \begin{subfigure}[t]{0.49\columnwidth}\centering
        \includegraphics[width=\columnwidth]{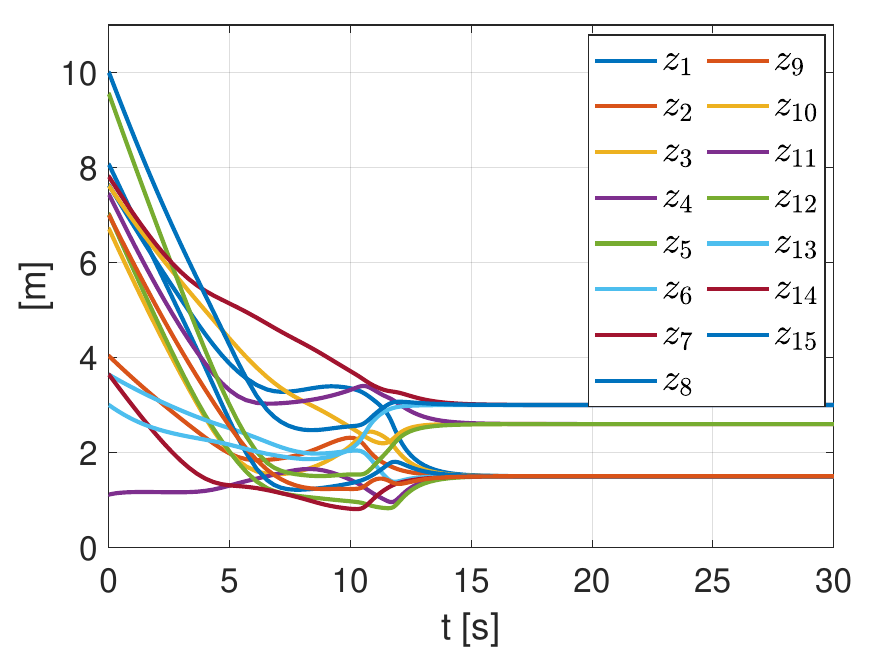}
        \caption{\label{fig:Sim_H_z}Distances $z$.}
    \end{subfigure}
    \begin{subfigure}[t]{0.49\columnwidth}\centering
        \includegraphics[width=\columnwidth]{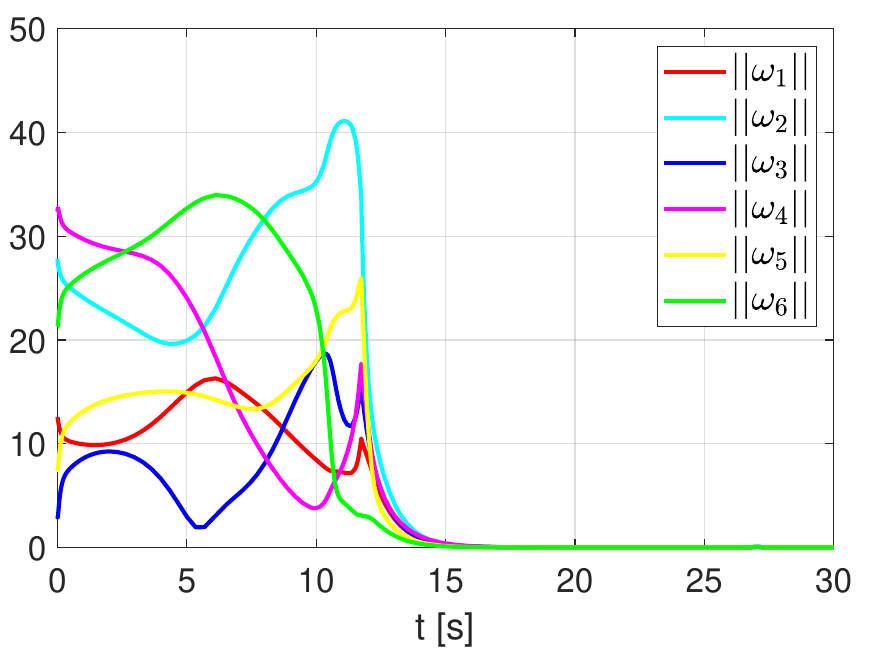}
        \caption{\label{fig:Sim_H_w}Control signal norm $\|\omega\|$.}
    \end{subfigure}
\caption{\label{fig:Sim_H}\textit{Position Control} strategy with $\omega$ (plotted until $t=30s$).}
\end{figure}

\begin{remark}\label{rem:chattering}
In this particular case, the convergence to zero of the trajectory error is ensured by the position control. 
That is, the incorporation of the control law $\omega$ is not intended to reject disturbances neither parametric uncertainties; instead, it is aimed at ensuring convergence to the desired equilibrium point in closed-loop. 
In this way, the control signal does not oscillate in the phenomenon known as chattering but tends towards $\omega=0$.
\end{remark}

In the snapshots of the evolution, shown in Fig. \ref{fig:Sim_H_2}, the action of the control law $\omega$ to reposition the agents in the desired topology is graphically illustrated. By the time $t=10s$, the agents have formed the desired topology and orientation, with agent $i=1$ following the reference trajectory.

\begin{figure}[t!]
    \begin{subfigure}[t]{0.49\columnwidth}
        \includegraphics[width=\columnwidth]{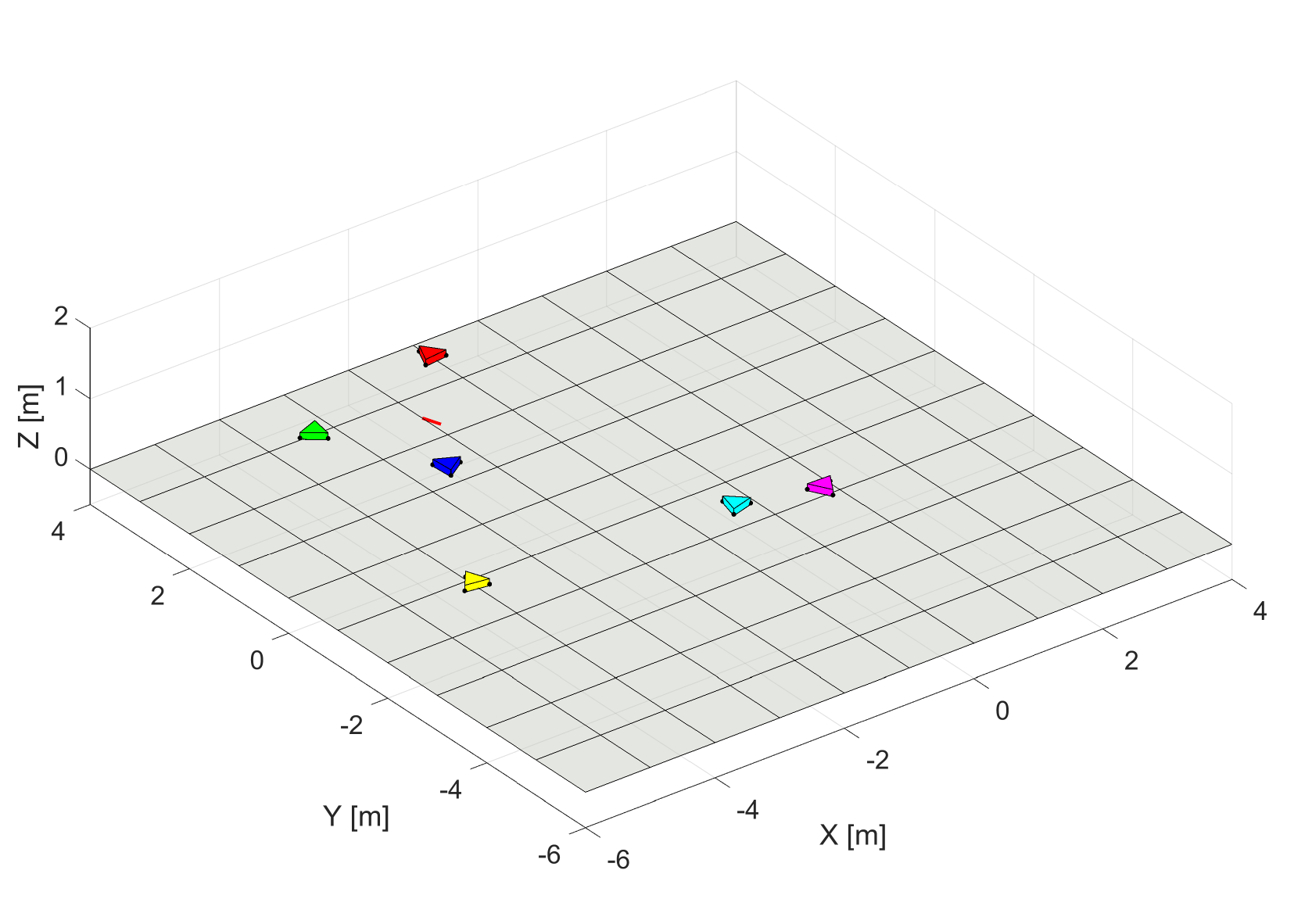}
        \caption{$t=4s$.}
    \end{subfigure}
    \begin{subfigure}[t]{0.49\columnwidth}
        \includegraphics[width=\columnwidth]{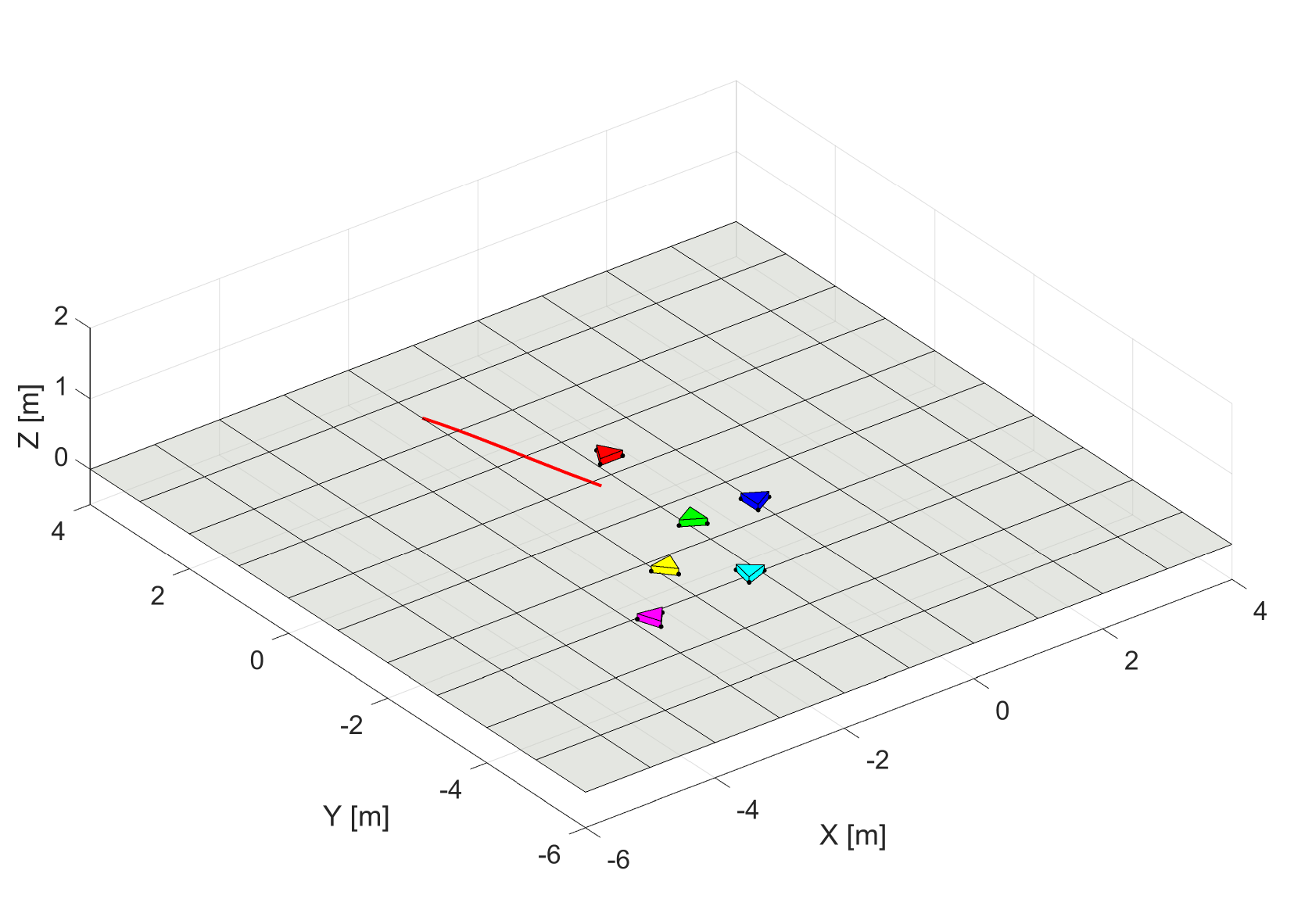}
        \caption{$t=10s$.}
    \end{subfigure}
    \begin{subfigure}[t]{0.49\columnwidth}
        \includegraphics[width=\columnwidth]{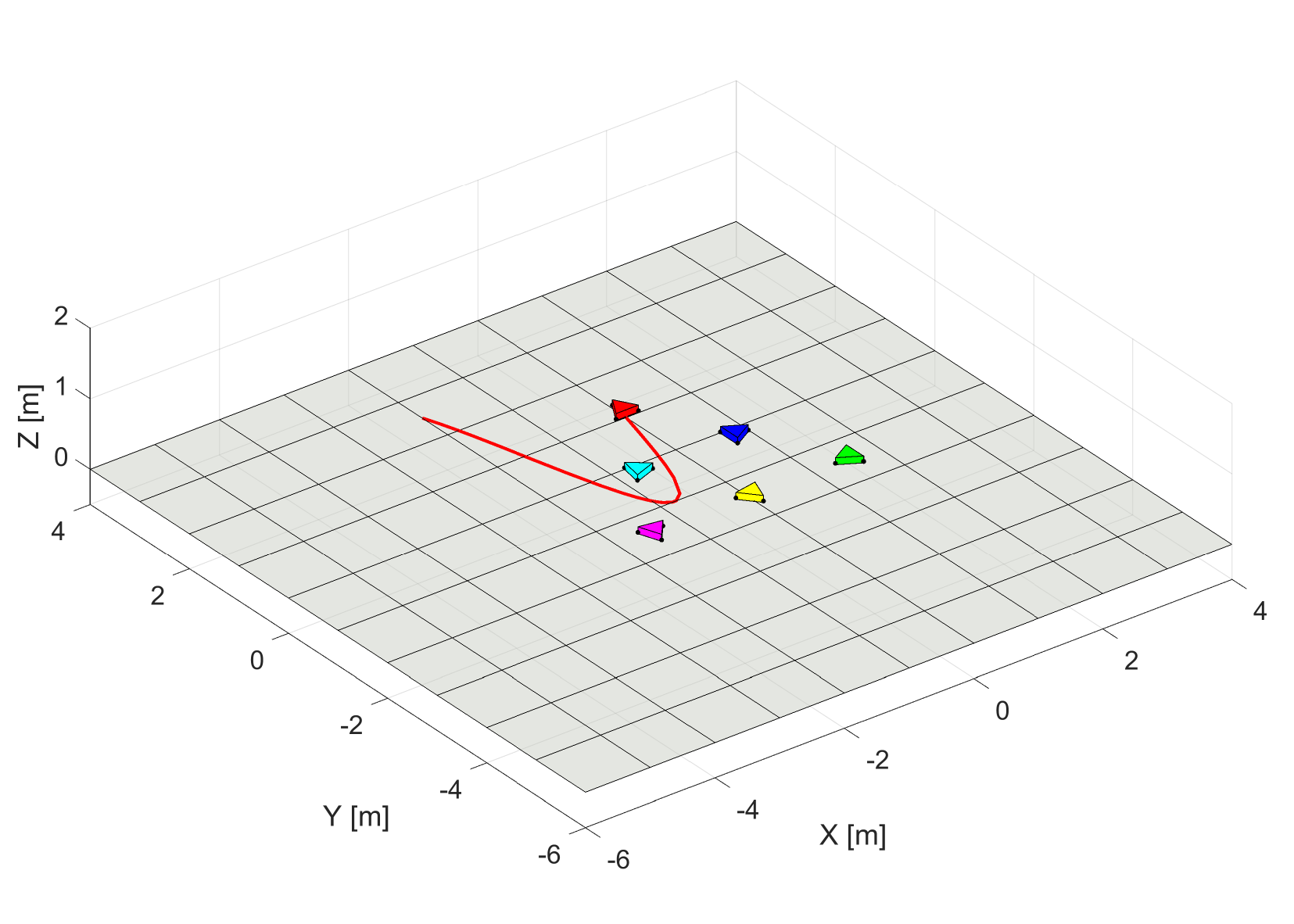}
        \caption{$t=30s$.}
    \end{subfigure} 
    \begin{subfigure}[t]{0.49\columnwidth}
        \includegraphics[width=\columnwidth]{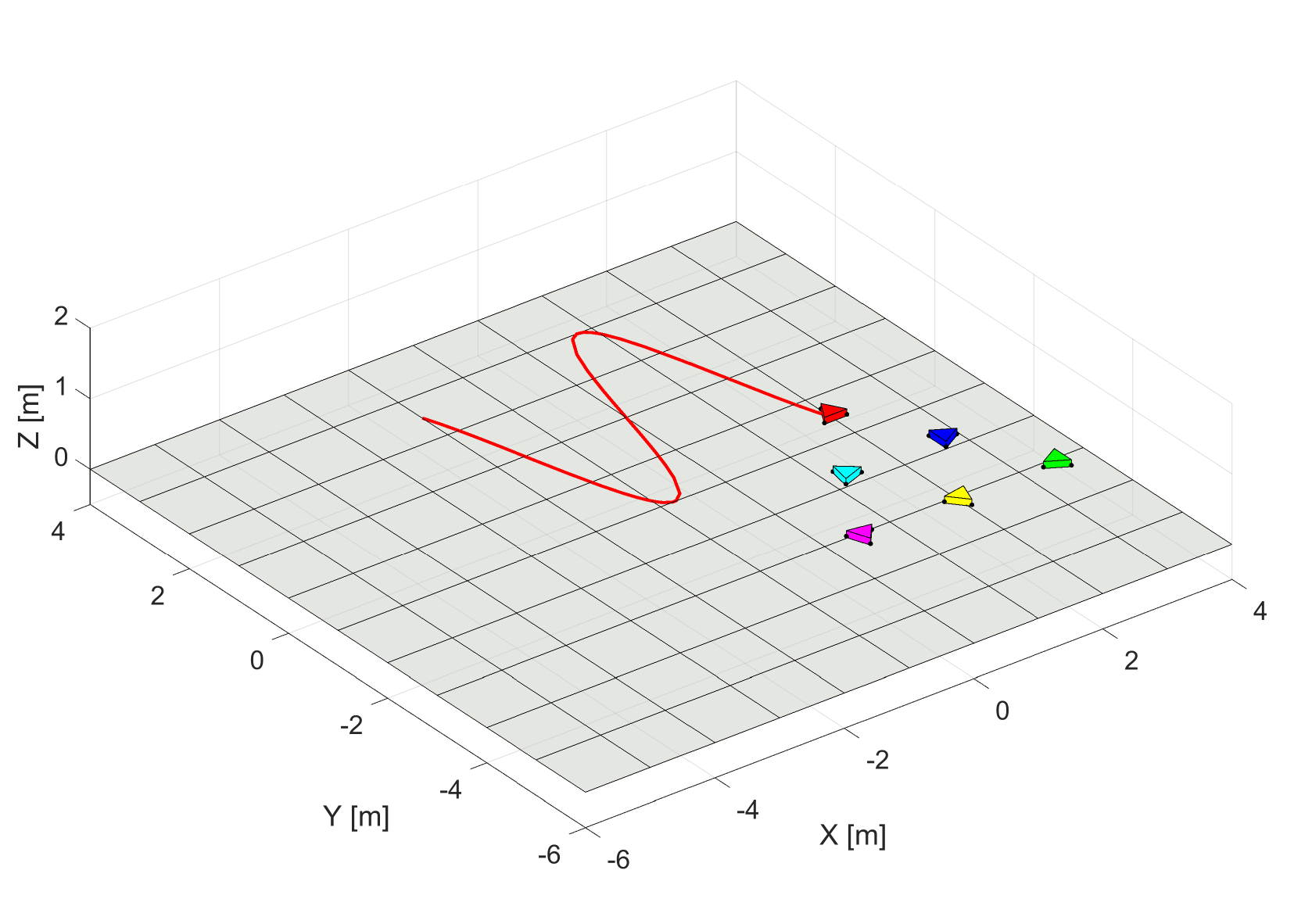}
        \caption{$t=80s$.}
    \end{subfigure}  
    \caption{\label{fig:Sim_H_2}\textit{Position Control} strategy - movements in the plane - Graph $\mathcal{G}_2$.}  
\end{figure}

\subsection{Discussions}

Some advantages of the control strategies that we can mention are:
\begin{itemize}
\item In the \textit{Leader Agent Control}, both formation control and collision prevention are carried out through a single controller, which also has a physical interpretation.
\item Only the \textit{Position Control} ensures the orientation of the topology is always as desired, with no risks of collisions between any of the agents.
\end{itemize}

Among the disadvantages are:
\begin{itemize}
\item In the \textit{Leader Agent Control}, for the matrix $\mathcal{T}(r)$ to be full rank, the graph must be minimally rigid, resulting in the formation being characterized by a graph of $m=9$ edges. 
This limitation does not avoid the collisions between agents whose distances are not represented by the graph $\mathcal{G}_1$.
\item \textit{Position Control} lacks a physical interpretation, and the control is more focused on each individual agent rather than the formation itself.
\end{itemize}

Formation controls can be classified based on the controlled variable whether it is position, displacement, or distance \cite{oh2015survey}.
Clearly, \textit{Position Control} classifies as a position-based formation control, while the \textit{Leader Agent Control} scheme result in a hybrid between distance and displacement control.
This is because, even though this strategy actively controls the distance, calculating the control signals involves measuring global displacements between agents for the computation of the \textit{Formation Matrix}.
Furthermore, since it is necessary to know the position of the leader agent and its neighbor, it could also fall into the classification of position-based control.

Both formation schemes addressed can also be differentiated based on their interpretation. 
The \textit{Leader Agent Control} strategy is based on basic elements of mechanics (springs and dampers) and treats the MAS as a semi-rigid body composed of interconnected subsystems (via the ECBs). Therefore, its results have a physical interpretation.
On the other hand, the \textit{Position Control} introduces a VRS, where the control law expressions emerge from the geometry of the structure.  Hence, its interpretation is purely mathematical.

In a physical scenario, the proposed strategy must be implemented in a centralized scheme, since computing the \textit{Formation Matrix} requires the global positions of the agents. These positions can be obtained through the fusion of camera measurements, IMU data, and odometry.

Concerning the methodology used in this work, we emphasize that the BG technique has been instrumental in the derivation of the formation control law at several instances. First, it allowed for a straightforward representation of the energy phenomena (storage, dissipation) and the physical interconnection and power exchange structure of the proposed virtual physically-inspired solution. Second, the causal relationships provided by the SCAP (Sequential-Causality Assignment Procedure) \cite{karnopp2012system}, allowed for the derivation of a dynamic control law a-priori solving the problem. Moreover, these very same causal relationships reveal, in conjunction with the \textit{Formation Matrix}, the existence of an algebraic dependence between the MAS and the controller state variables, fact that immediately suggest this dependence being a Casimir function and, thus, the possibility of rewriting the controller means of a static state-feedback law. This is immediately verified and proved via the straightforward reformulation of the problem into the CbI-form in the pHS-framework.

\section{Conclusion}\label{sec:Con}

In this work we presented two control strategies based on the novel concept of  \textit{Formation Matrix}.
We have demonstrated that for minimally rigid graphs, the \textit{Formation Matrix} has full rank, a key feature to ensure the stability of the equilibrium point when implementing the \textit{Leader Agent Control} strategy.
We also demonstrated that the physical approach with ECBs, combined with a position-based formation control and sliding mode control, worked correctly while avoiding high-frequency commutation.
Regarding stability, in both schemes, stability of the equilibrium point was demonstrated in the Lyapunov sense. In the specific case of the \textit{Position Control}, the objective of the control law $\omega$ was to ensure convergence to the desired equilibrium point.
Future work is focused on extending the results to differential mobile bases and studying the limitations of distance-based formation control strategies in ensuring the desired orientation and topology of the formation.
Future work aims also to extend the \textit{Formation Matrix} concept to three-dimensional scenarios, e.g., for unmanned aerial vehicles, by measuring their positions in space.

\section*{Acknowledgment}
The authors wish to thank the National University of Rosario for funding this work through the projects PPCT-80020220600083UR and PID-80020190300098UR.

\bibliographystyle{IEEEtran}
\bibliography{IEEEabrv,Biblio_paper}

\end{document}